\newcommand{\bI}{\mathbbm{1}}
\newcommand{\bE}{\mathbb{E}}
\newcommand{\bP}{\mathbb{P}}
\newcommand{\bR}{\mathbb{R}}
\newcommand{\cF}{\mathcal{F}}
\newcommand{\cL}{\mathcal{L}}
\newcommand{\cQ}{\mathcal{Q}}
\newcommand{\vell}{\boldsymbol{\ell}}
\newcommand{\CQC}{\mathsf{CQ}}
\newcommand{\RF}{\mathsf{RF}}
\newcommand{\scaleX}{12}
\newcommand{\scaleY}{1.33}
\newcommand{\updown}{}
\newcommand{\argmin}{\textrm{argmin}}
\newcommand{\argmax}{\textrm{argmax}}
\begin{document}

\title{Revenue-Optimal Deterministic Auctions for Multiple Buyers with Ordinal Preferences over Fixed-price Items
\thanks{The author thanks anonymous reviewers, whose detailed suggestions helped improve and polish the paper.
}
}
\titlerunning{Assortment Auctions}
%
\author{
Will Ma\orcidID{0000-0002-2420-4468}
}
\authorrunning{W. Ma}
%
\institute{Graduate School of Business, Columbia University, New York NY 10027, USA \\ \email{wm2428@gsb.columbia.edu}}
\maketitle              
\begin{abstract}
In this paper, we introduce a Bayesian revenue-maximizing mechanism design model where the items have fixed, exogenously-given prices. Buyers are unit-demand and have an ordinal ranking over purchasing either one of these items at its given price, or purchasing nothing. This model arises naturally from the assortment optimization problem, in that the single-buyer optimization problem over deterministic mechanisms reduces to deciding on an assortment of items to ``show''. We study its multi-buyer generalization in the simplest setting of single-winner auctions, or more broadly, any service-constrained environment. Our main result is that if the buyer rankings are drawn independently from Markov Chain ranking models, then the optimal mechanism is computationally tractable, and structurally a virtual welfare maximizer. We also show that for ranking distributions not induced by Markov Chains, the optimal mechanism may not be a virtual welfare maximizer.

\keywords{Bayesian mechanism design \and Assortment optimization.}
\end{abstract}

\section{Introduction}
In this paper,
we study auction design for unit-demand buyers when the prices of the products are fixed.
In particular, a seller is endowed with $n$ substitutable products with exogenously-given prices $r_1,\ldots,r_n\ge0$.
A buyer's outcome and payment from participating in the auction will always take the form ``receive product $j$ and pay price $r_j$'', for some $j=0,\ldots,n$, where we let $j=0$ represent the ``no-purchase'' option with $r_0=0$.

We restrict attention to mechanisms that are \textit{deterministic} and \textit{dominant-strategy} (DS) truthful.
Under this restriction, a buyer's preference is fully captured by a weak ordering of the options ``buy product $j$ for price $r_j$'', for $j=0,\ldots,n$.
Our mechanisms will be DS \textit{incentive-compatible}, where a buyer can never get a more-preferred option from lying about her ranking, and DS \textit{individually-rational}, where a buyer can never get an option she ranks below the no-purchase option.

We study revenue maximization in a Bayesian setting, where the buyers' rankings are drawn from \textit{known}, \textit{independent} (non-identical) distributions.
There is also a feasibility constraint on the inventory of products available to be allocated to buyers.
The problem is then: maximize the expected revenue of a mechanism with respect to these random rankings, subject to it being deterministic, DS incentive-compatible and individually-rational, and satisfying the feasibility constraint in its allocations.

\subsection{Motivation and Related Problems}\label{sec::motivation}

We now explain why we believe this to be a well-motivated problem that arises naturally in relation to the streams of existing literature.

\paragraph{Mechanism design without money.}
Although we use the language of ``prices'' and ``revenue'', we are technically designing mechanisms ``without money'' \cite{schummer2007mechanism}, because the auctioneer cannot charge arbitrary payments.
Our use of ordinal preferences with an outside option has previously appeared in the house allocation problem \cite{shapley1974cores,roth1982incentive,ma1994strategy},
and our focus on deterministic mechanisms and DS truthfulness is inherited\footnote{Although, we should note that there do exist randomized extensions of truthfulness for ordinal preferences \cite{chakrabarty2014welfare}.} from the more general context of allocation mechanisms under one-sided ordinal preferences \cite{svensson1999strategy,bhalgat2011social,sprumont1991division,barbera1997strategy}.
Our paper is different from these lines of work in that we are maximizing the Bayesian expectation of a cardinal objective function, where we have been given real numbers $r_1,\ldots,r_n$ as the rewards for the successful allocations of the products.
By contrast, these lines of work derive settings and conditions under which the feasible space of mechanisms can be nicely characterized, e.g.\ using top-trading cycles \cite{shapley1974cores,roth1982incentive,ma1994strategy}, serial dictatorships \cite{svensson1999strategy}, or the uniform allocation rule \cite{sprumont1991division}.

\paragraph{Bayesian mechanism design.}
On the other hand, the difference between our mechanisms and those for Bayesian unit-demand revenue maximization \cite{chawla2007algorithmic,chawla2015power,chen2015complexity,chen2018complexity}
is that our mechanisms must charge the fixed prices $r_1,\ldots,r_n$, instead of being able to tweak the payments to lure buyers into higher-valued products.
Also, the distributional assumption we will make on the buyers' rankings
(namely, being generated by Markov chains) is combinatorial, and generally incomparable to the assumptions made in this literature (usually on valuations being independent across products).

\paragraph{Assortment optimization and sequential posted assortment.}
When there is a single buyer, our mechanism design problem reduces to the assortment optimization problem, where the mechanism must decide on a subset of products to show\footnote{We will formally prove this reduction using the taxation principle.
Note that the optimal mechanism may not show all products, instead ``hiding'' some lower-priced products to prevent them from being chosen in lieu of higher-priced products.} the buyer.
This is a classical problem in revenue management \cite{talluri2004revenue,farias2013nonparametric,li2015d},
motivated by brick-and-mortar retailers who do not control the pricing but have full control over the ``assortment'' of products carried.
There has also been a recent line of work on the ``sequential posted assortment'' problem \cite{golrezaei2014realtime,ma2020algorithms,rusmevichientong2020dynamic}, motivated by online retailers who recommend personalized subsets to their heterogeneous customers.

Our problem is a generalization of the assortment optimization problem to multiple buyers.
On the other hand, our mechanisms form a superclass of ``sequential posted assortment'' mechanisms,
which is analogous to the well-known relationship between classical auctions and ``sequential posted pricing'' \cite{chawla2010multi}.
In light of these relationships, we will refer to our mechanism design problem as \textbf{assortment auction} problem.

\subsection{Results for Assortment Auctions}\label{sec::introEg}

Our results are focused on the special case of our model where the feasibility constraints are \textit{product-independent}, only depending on the set of ``winners'' allocated a non-zero product but not which specific products they were allocated.
It still captures the fundamental setting of a single-winner auction, as well as the single-leg revenue management problem where $b$ identical units (e.g.\ flight seats, hotel rooms) could be sold using different ``fare classes'' with fixed prices \cite{gallego1994optimal,talluri2004revenue}.
This assumption is generally justified whenever the products correspond to different ``packagings'' of an underlying item, or different ``services'' which share the same limiting resource (e.g.\ different types of massages with the masseur/masseuse, different VIP packages for time backstage).
These settings are also the focus in the papers by \cite{alaei2012bayesian,alaei2013simple}, in which they are called \textit{service-constrained environments}.
We note that our combination of fixed prices with a service-constrained environment is quite natural in the application of the airline selling $b$ seats under different fare classes.

In the Introduction we will describe our results in the further special case of auctions with a \textit{single winner},
and start to describe our results by relating our problem to classical single-item auctions.
Suppose that the preferences are \textit{buy-down}, where any realizable ranking prefers the non-zero products in order of low-to-high prices.
Such a ranking is characterized by a \textit{valuation}, equal to the maximum price of a product ranked higher than product~0.
In this case, the well-known result of Myerson \cite{myerson1981optimal} says that the optimal mechanism is deterministic, dominant-strategy truthful, and can be implemented in a way where the winner always pays one of the prices in $r_1,\ldots,r_n$.\footnote{
See \cite{elkind2007designing}, who derives Myerson's mechanism for discrete valuations.
Although there do exist randomized, symmetric implementations of Myerson, the one which is deterministic and breaks ties in a consistent order always charges one of the prices in $r_1,\ldots,r_n$, and thus can be implemented as an assortment auction.
}
Hence, Myerson's mechanism is the optimal solution in our setting in the special case of buy-down preferences.
Moreover, Myerson's mechanism is structurally a virtual welfare maximizer, in that each buyer, based on only her report and distribution, is assigned a univariate score called a \textit{virtual valuation},
after which the buyer with the highest virtual valuation is declared the winner.
We will call this a ``Myersonian'' structure.

Our first result is a negative one, which shows that for general preference distributions, 
the optimal auction may not have the ``Myersonian'' structure described above.
Namely, when non-zero products are not necessarily ranked in order of low-to-high prices, even with two IID buyers, the optimal allocation rule may want buyer~1 to win the auction when the buyers report the \textit{same} ranking, and buyer~2 to win when they report \textit{different} rankings.
Such an allocation rule clearly cannot be defined via virtual valuations.
Although it was already known that assortment optimization (the special case of our problem with a single buyer) is NP-hard for general preference distributions \cite{aouad2018approximability}, this result shows that even structurally, one cannot hope to derive the optimal auction using only Myersonian mechanisms based on virtual valuations.

Our main positive result is that for preference distributions induced by \textit{Markov Chain choice models}, the optimal mechanism is structurally Myersonian, and computationally tractable.
This is a broad class of choice models introduced in \cite{zhang2005revenue,blanchet2016markov},
where the random ranking satisfies a memorylessness property that the next product in the ranking depends probabilistically on only the current product, and not the entire history.
Markov Chain choice models capture the buy-down preferences corresponding to the classical auctions setting, so our result generalizes Myerson's mechanism for discrete valuations.
They also capture the commonly-used \textit{Multi-Nomial Logit (MNL) choice model}
(a.k.a.\ the Plackett-Luce vase model), as well as the case of single-minded buyers.
Finally, we should mention that the tractability of assortment optimization for Markov Chain choice models was already known \cite{blanchet2016markov,desir2019capacity,feldman2017revenue}, so our auction extends this result to multiple buyers.

\subsection{Description of Optimal Myersonian Auction}

We now explain how our generalized Myersonian mechanism assigns each buyer a virtual valuation based on her reported ranking and ranking distribution.
It is a generalization of how virtual valuations can be assigned in the classical auctions setting based on the ``ironed revenue curve''.
We will consider the following example: there are four products $A,B,C,D$ with prices $r_A=12,r_B=7.5,r_C=4.5,r_D=4$.
One buyer's ranking distribution is uniform over lists $(CBA),(CB),(CD),(C)$, where e.g.\ list $(CB)$ means that her first choice is to buy C at price $r_C$, second choice is to buy B at price $r_B$, and third choice is to buy nothing (we can ignore ordering after the no-purchase option).

\paragraph{Auction pre-processing.}

Fix a buyer and consider the assortment optimization problem with just that buyer.
For an assortment $S$, let $Q(S)$ denote the probability of selling a product when $S$ is offered, and let $R(S)$ denote the expected revenue.
In the example above, if $S=\{A,B,D\}$, then $Q(S)=\frac{3}{4}$ and $R(S)=\frac{1}{2}r_B+\frac{1}{4}r_D=4.75$.

Now, consider the two-dimensional plot consisting of points $(Q(S),R(S))$ for every assortment $S$.
Call the upper concave envelope of these points the \textit{revenue frontier}.
The revenue frontier is formed by connecting the points for a \textit{sequence of efficient assortments}.
In the example above, this sequence is $\{A\},\{A,D\},\{A,B,D\},\{A,B,C,D\}$ (see Figure~\ref{fig::introEg}).

\begin{figure}[t]
\centering
\begin{tikzpicture}
\draw[->](0,0)--(1*\scaleX,0)node[right]{$Q(S)$};
\draw(0*\scaleX,0)--(0*\scaleX,-0.2)node[below]{\footnotesize 0};
\node[align=center,blue] at (0.14*\scaleX,0.25*\scaleY) {$V(CBA)=12$};
\draw(0.25*\scaleX,0)--(0.25*\scaleX,-0.2)node[below]{\footnotesize 0.25};
\node[align=center,blue] at (0.375*\scaleX,0.25*\scaleY) {$V(CD)=4$};
\draw(0.5*\scaleX,0)--(0.5*\scaleX,-0.2)node[below]{\footnotesize 0.5};
\node[align=center,blue] at (0.625*\scaleX,0.25*\scaleY) {$V(CB)=3$};
\draw(0.75*\scaleX,0)--(0.75*\scaleX,-0.2)node[below]{\footnotesize 0.75};
\node[align=center,blue] at (0.875*\scaleX,0.25*\scaleY) {$V(C)=-1$};
\draw(1*\scaleX,-0.2)--(1*\scaleX,-0.2)node[below]{\footnotesize 1};
\draw[->](0,0)--(0,4.75*\scaleY)node[above]{$R(S)$};

\draw[dashed,gray](0,3*\scaleY)node[left,black]{\footnotesize 3}--(0.25*\scaleX,3*\scaleY);
\draw[dashed,gray](0,4*\scaleY)node[left,black]{\footnotesize 4}--(0.5*\scaleX,4*\scaleY);
\draw[dashed,gray](0,4.75*\scaleY)node[left,black]{\footnotesize 4.75}--(0.75*\scaleX,4.75*\scaleY);
\draw[dashed,gray](0,4.5*\scaleY)node[left,black]{\footnotesize 4.5}--(1*\scaleX,4.5*\scaleY);

\draw[thick,red](0,0)--node[below,sloped,blue]{\footnotesize slope=12}(0.25*\scaleX,3*\scaleY);
\draw[thick,red](0.25*\scaleX,3*\scaleY)--node[below,sloped,blue]{\footnotesize slope=4}(0.5*\scaleX,4*\scaleY);
\draw[thick,red](0.5*\scaleX,4*\scaleY)--node[below,sloped,blue]{\footnotesize slope=3}(0.75*\scaleX,4.75*\scaleY);
\draw[thick,red](0.75*\scaleX,4.75*\scaleY)--node[below,sloped,blue,fill=white]{\footnotesize slope=-1}(1*\scaleX,4.5*\scaleY);

\filldraw (0.25*\scaleX,3*\scaleY)node[above left]{$S=\{A\}$} circle (2pt);
\filldraw (0.5*\scaleX,4*\scaleY)node[above left]{$S=\{A,D\}$} circle (2pt);
\filldraw (0.75*\scaleX,4.75*\scaleY)node[above]{$S=\{A,B,D\}$} circle (2pt);
\filldraw (1*\scaleX,4.5*\scaleY)node[above]{$S=\{A,B,C,D\}$} circle (2pt);
\filldraw (0.25*\scaleX,1*\scaleY)node[above right]{$S=\{D\}$} circle (2pt);
\filldraw (0.5*\scaleX,3.75*\scaleY)node[below right]{$S=\{A,B\}$} circle (2pt);
\end{tikzpicture}
\caption{
The points $(Q(S),R(S))$ plotted for every $S$, all of which are equivalent to one of the six assortments shown.
The revenue frontier is the upper concave envelope in \textcolor{red}{red}, formed by joining the sequence of efficient assortments.
The slopes of the revenue frontier and the resulting virtual valuations V are shown in \textcolor{blue}{blue}.
}
\label{fig::introEg}
\end{figure}

\paragraph{Virtual valuation assignment.}

Fix a buyer and suppose that the sequence of efficient assortments for her list distribution has been identified.
Now, if she reports a realized list of $\ell$, then we find the first efficient assortment $S$ which intersects $\ell$, and set her virtual valuation equal to the slope of the revenue frontier on the \textit{left} side of point $(Q(S),R(S))$.

In the example above, if $\ell=(CD)$, then $\{A,D\}$ is the first efficient assortment to intersect $\ell$, since the earlier assortment in the sequence, $\{A\}$, does not.
The virtual valuation is the slope of the line segment from $(Q(\{A\}),R(\{A\}))$ to $(Q(\{A,D\}),R(\{A,D\}))$, equal to 4 (see Figure~\ref{fig::introEg}).

\paragraph{Winner and allocation.}

Suppose that each buyer has been assigned a virtual valuation based on her list and distribution, as described above.
The winner is then the buyer with the highest virtual valuation.
She is allocated her most-preferred product from the assortment whose left-side slope represents the \textit{minimum} virtual valuation she could have had to win the auction.

In the example above, suppose that the buyer's list realizes to $(CBA)$ and that her virtual valuation of 12 is the highest.
If the second-highest virtual valuation is 3.5, then she would get her most-preferred product from assortment $\{A,D\}$, whose left-side slope is 4 (see Figure~\ref{fig::introEg}).
As a result, she would end up paying 12 for product $A$.
On the other hand, if the second-highest virtual valuation is 2.5, then she would get to choose from assortment $\{A,B,D\}$, whose left-side slope is 3.  As a result, she would get her more-preferred option of paying 7.5 for product $B$.

Like in Myerson's original mechanism, a negative virtual valuation is not allowed to win the auction, and hence even with no competition from other buyers, this buyer would still be restricted to assortment $\{A,B,D\}$ (not $\{A,B,C,D\}$, whose left-side slope is -1).
$\{A,B,D\}$ is our analogue of a \textit{reserve price}, and is the solution to the assortment optimization problem for this single buyer.

We note some novel and surprising aspects of our virtual valuations and mechanism:
\begin{itemize}
\item List $(CB)$ has a lower virtual valuation than list $(CD)$, despite product~$B$ having a higher price than product~$D$.
This is because product~$B$ \textit{cannibalizes} the sale of product~$A$ in the list $(CBA)$, whereas product~$D$ does not.
\item In the example above, the efficient assortments happened to be \textit{nested}, with $\{A\}\subset\{A,D\}\subset\{A,B,D\}$.
This means that less competition (lower virtual valuations from other buyers) leads to more choice for the winner, which is consistent with the intuition from classical auctions.
However, we also provide an assortment auction (\textbf{Example~\ref{eg::nonNested}} in \textbf{Section~\ref{sec::examples}}) for which this is not the case and the classical intuition breaks down.
\end{itemize}

\subsection{Establishing Optimality for Markov Chain Choice Models}

Nothing from the auction described in Section~\ref{sec::introEg} required the list distributions to arise from Markov chains.
We now explain what could go wrong without such an assumption, thereby sketching our proof that the auction is optimal for Markov Chain choice models.

First, we need to show that the revenue frontier and efficient assortments from Section~\ref{sec::introEg} can be tractably computed.
We show that for a Markov Chain choice model, the revenue frontier is always defined by a nested sequence of efficient assortments, which can be greedily constructed (\textbf{Section~\ref{sec::markovChain}}).
Our greedy procedure modifies the ``externality-adjustment'' technique of \cite{desir2019capacity}, and generates a sequence of products which yield the revenue frontier, which is different than the sequence used by \cite{desir2019capacity} for their constrained assortment optimization problem.

Second, we need to show that in the allocation rule from Section~\ref{sec::introEg}, whenever the winner has a positive virtual valuation, the assortment she gets to ``choose from'' actually contains a product she wants.
This is not obvious; in fact, there exist distributions (\textbf{Example~\ref{eg::pathological}} in \textbf{Section~\ref{sec::examples}}) for which the winner may choose no product.
Nonetheless, we introduce a condition called \textit{implementability}, under which the winner always chooses a product
and the \textit{virtual surplus}\footnote{This is the expected value of the maximum among all virtual valuations and 0.} is earned by the Myersonian mechanism (\textbf{Section~\ref{sec::implInsurm}}).
We use the nested property of the efficient assortments of a Markov Chain choice model to establish implementability (\textbf{Section~\ref{sec::mcImplInsurm}}).

Third, we need to show that earning the virtual surplus is optimal.
This does not follow from above; in fact, in our example (\textbf{Example~\ref{eg::nonMyersonian}} in \textbf{Section~\ref{sec::examples}}) where the optimal auction is non-Myersonian, implementability is satisfied and the efficient assortments are even nested!
Nonetheless, we introduce a condition called \textit{insurmountability} which ensures that the virtual surplus cannot be surpassed by a truthful mechanism (\textbf{Section~\ref{sec::implInsurm}}).
We show that it is satisfied under Markov Chain choice models, by proving that the cannibalization patterns from our non-Myersonian example cannot arise under the memorylessness property of Markov chains (\textbf{Section~\ref{sec::mcImplInsurm}}).

We should note that our implementability and insurmountability conditions are similar to the \textit{revenue linearity} condition from \cite{alaei2013simple}, who also studied the existence of Myersonian mechanisms under product-independent feasibility constraints.
However, they focus on cardinal instead of ordinal preferences, and their definitions are stated with respect to Bayesian instead of dominant-strategy truthfulness, which is why we independently develop our conditions using the language of assortment optimization.
Importantly, our main result for Markov Chain ordinal preferences is quite general in that it holds regardless of the prices $r_1,\ldots,r_n$ of the products, which cannot be derived from their results.

\section{Definition of Assortment Auctions} \label{sec::assortmentAuctions}

A seller has a set of products $N=\{1,\ldots,n\}$.
Each product $j\in N$ has a fixed price $r_j$, and the products are indexed so that $0\le r_1\le\cdots\le r_n$.
We also let $j=0$ refer to a ``no-purchase'' product, with $r_0=0$, which is always available.
Throughout this paper, for an arbitrary subset of products $S\subseteq N$, we will let $S_+$ denote the set $S\cup\{0\}$.

There is a set of buyers $M=\{1,\ldots,m\}$.
Each buyer $i\in M$ has a ranked list $\ell_i$ from which she is willing to purchase at most one product, at its corresponding price.
For example, the ranked list $(1,3)$ indicates that the buyer's first choice is to purchase product~1 at price $r_1$, second choice is to purchase product~3 at price $r_3$, and third choice is to make no purchase.
The list omits product~0 and all the products ranked after it, since the ordering after product~0 is irrelevant.
We let $\Omega$ denote the universe of all possible lists, which are the ordered subsets of $N$.

The ranked list can be interpreted as the set of products for which the buyer's utility minus price is positive, sorted in decreasing order.
We treat $\ell_i$ as both a set and a ranking, where $\ell_i(j)$ denotes the rank of product $j$ in list $\ell_i$, with smaller numbers meaning more preferred.
We define $\ell_i(0)=|\ell_i|+1$ and $\ell_i(j)=\infty$ for all $j\in N\setminus\ell_i$.
When presented with an \textit{assortment} of products $S\subseteq N$, a buyer $i$ chooses her most-preferred product from $S_+$, i.e.\ $\argmin_{j\in S_+}\ell_i(j)$.

We let $\vell=(\ell_1,\ldots,\ell_m)$ denote the \textit{list profile}, which consists of all buyers' ranked lists.
We often write $\vell$ as $(\ell_i,\vell_{-i})$, where $\vell_{-i}$ consists of all ranked lists except that of buyer $i$.
A \textit{mechanism} takes in a list profile $\vell$ and outputs an allocation vector $(j_1(\vell),\ldots,j_m(\vell))\in (N_+)^m$, where each buyer $i$ receives product $j_i(\vell)$ and pays $r_{j_i(\vell)}$.

We only consider deterministic mechanisms, because in our setting with ordinal preferences, a buyer's preference under uncertainty is not defined.
Similarly, we only consider dominant-strategy truthfulness, and say that a mechanism is \textit{truthful} if
\begin{align}
\ell_i(j_i(\ell_i,\vell_{-i})) &\le\ell_i(j_i(\ell'_i,\vell_{-i})) &&\forall i\in M,\vell_{-i}\in\Omega^{m-1},\ell_i\in\Omega,\ell'_i\in\Omega; \label{eqn::incentiveCompatible} \\
\ell_i(j_i(\ell_i,\vell_{-i})) &\le\ell_i(0) &&\forall i\in M,\vell_{-i}\in\Omega^{m-1},\ell_i\in\Omega. \label{eqn::individuallyRational}
\end{align}
\eqref{eqn::incentiveCompatible} imposes that the mechanism is \textit{incentive-compatible}, where a buyer $i$ always receives a less-preferred product when she misreports her list as $\ell'_i$ instead of her true list $\ell_i$.
\eqref{eqn::individuallyRational} imposes that the mechanism is \textit{individually-rational}, where a buyer $i$ cannot be forced into purchasing a product which is less-preferred to purchasing nothing.

Hereafter, we restrict to truthful mechanisms and make no distinction between a buyer's reported list and true list.
We also assume that the lists are \textit{strictly ordered}, where we note that the truthfulness constraints~\eqref{eqn::incentiveCompatible}--\eqref{eqn::individuallyRational} are only easier to satisfy if we allow for indifference in the lists.
The following statement provides a useful characterization of all deterministic truthful mechanisms, and is analogous to the ``taxation principle'' for classical auctions.
\begin{proposition}[Taxation Principle for Assortment Auctions] \label{prop::taxation}
Any deterministic truthful mechanism can be characterized by functions $T_i$ for the buyers $i\in M$, where each $T_i$ takes in the other lists $\vell_{-i}$ and outputs an assortment $T_i(\vell_{-i})$, such that the allocation vector satisfies
\begin{align} \label{eqn::taxationConstr}
j_i(\vell)=\argmin_{j\in T_i(\vell_{-i})_+}\ell_i(j) &&\forall i\in M.
\end{align}
\end{proposition}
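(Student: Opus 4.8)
The plan is to prove this in the standard two directions, though only the forward direction (any truthful mechanism has this form) requires real work; the converse (any mechanism of this form is truthful) is a routine check. For the forward direction, fix a buyer $i$ and a profile $\vell_{-i}$ of the other buyers' lists. The key object to construct is the \emph{menu} available to buyer $i$: let $A_i(\vell_{-i})=\{j_i(\ell'_i,\vell_{-i}):\ell'_i\in\Omega\}$ be the set of all products buyer $i$ could possibly receive by varying her report while the others' reports are held fixed. I would then argue that the allocation rule, restricted to buyer $i$, is exactly ``buyer $i$ picks her favorite element of $A_i(\vell_{-i})_+$,'' and set $T_i(\vell_{-i})=A_i(\vell_{-i})\setminus\{0\}$.

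The heart of the argument is showing $j_i(\ell_i,\vell_{-i})=\argmin_{j\in A_i(\vell_{-i})_+}\ell_i(j)$ for every true list $\ell_i$. Write $j^*=j_i(\ell_i,\vell_{-i})$ and $k=\argmin_{j\in A_i(\vell_{-i})_+}\ell_i(j)$. First, $j^*\in A_i(\vell_{-i})_+$ always: either $j^*\in N$, in which case it is achieved by the report $\ell_i$ itself so $j^*\in A_i(\vell_{-i})$, or $j^*=0\in A_i(\vell_{-i})_+$ trivially; hence $\ell_i(j^*)\ge\ell_i(k)$ by definition of $k$. For the reverse inequality, I use incentive compatibility together with individual rationality. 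If $k=0$ then $\ell_i(j^*)\le\ell_i(0)=\ell_i(k)$ directly from \eqref{eqn::individuallyRational}. If $k\in N$, then $k\in A_i(\vell_{-i})$, so there is some report $\ell'_i$ with $j_i(\ell'_i,\vell_{-i})=k$; applying \eqref{eqn::incentiveCompatible} with true list $\ell_i$ and deviation $\ell'_i$ gives $\ell_i(j^*)=\ell_i(j_i(\ell_i,\vell_{-i}))\le\ell_i(j_i(\ell'_i,\vell_{-i}))=\ell_i(k)$. Combining the two inequalities and using that lists are strictly ordered (so the argmin is unique and $\ell_i(j^*)=\ell_i(k)$ forces $j^*=k$) yields $j^*=k$, which is \eqref{eqn::taxationConstr}.

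For the converse, given arbitrary functions $T_i$, the mechanism defined by \eqref{eqn::taxationConstr} is deterministic by inspection. Incentive compatibility \eqref{eqn::incentiveCompatible} holds because $T_i(\vell_{-i})$ does not depend on $\ell_i$, so the set from which buyer $i$'s outcome is chosen is fixed as $\ell_i$ varies, and $j_i(\ell_i,\vell_{-i})=\argmin_{j\in T_i(\vell_{-i})_+}\ell_i(j)$ is by construction the $\ell_i$-minimal element of that set, hence weakly preferred (under $\ell_i$) to $j_i(\ell'_i,\vell_{-i})\in T_i(\vell_{-i})_+$ for any $\ell'_i$. Individual rationality \eqref{eqn::individuallyRational} holds because $0\in T_i(\vell_{-i})_+$, so $\ell_i(j_i(\ell_i,\vell_{-i}))\le\ell_i(0)$.

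The main obstacle is conceptual rather than computational: one must recognize that the menu $A_i(\vell_{-i})$ extracted by ranging over all possible reports of buyer $i$ is the right object, and that incentive compatibility is precisely the statement that the true outcome optimizes $\ell_i$ over this menu. The only subtlety to handle carefully is the role of the ``strictly ordered'' assumption, which is needed to conclude that two products with equal rank in $\ell_i$ must be the same product; without it one would only get that $j^*$ and $k$ are $\ell_i$-indifferent, and the taxation representation would need to be stated up to such indifference (which, as the paper notes, only makes truthfulness easier, so nothing essential is lost).
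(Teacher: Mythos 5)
Your proof is correct and follows essentially the same approach as the paper: you define the menu $A_i(\vell_{-i})=\bigcup_{\ell'_i\in\Omega}j_i(\ell'_i,\vell_{-i})$, which is exactly the paper's taxation assortment, and then use incentive compatibility and individual rationality to show the true allocation is the $\ell_i$-favorite element of $A_i(\vell_{-i})_+$. Your write-up is somewhat more explicit than the paper's (adding the routine converse direction and spelling out the role of the strict-ordering assumption), but the core argument is identical.
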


Proposition~\ref{prop::taxation} can be proven by for each buyer $i$, fixing $\vell_{-i}$ and considering the possible products she could be allocated by the mechanism through the different lies $\ell'_i$ she could tell.  The corresponding assortment $T_i(\vell_{-i})$ she is allowed to choose from in Proposition~\ref{prop::taxation} is then the union $\cup_{\ell'_i\in\Omega}j_i(\ell'_i,\vell_{-i})$.  We provide a detailed proof in Appendix~\ref{apx:secTwo}.

\subsection{Bayesian Revenue Maximization with Product-Independent Feasibility Constraints} \label{sec::bayesianSingleItem}

Our paper assumes the \textit{Bayesian} setting, where the seller has full distributional information about the buyers' private lists.
We assume that the list of a buyer $i$ is drawn \textit{independently} from a distribution over $\Omega$, given by its discrete probability mass function $P_i$, which could be different across buyers.
In the Bayesian setting, we use $\ell_i$ to refer to the realized list of buyer $i$, and $l$ to refer to an arbitrary list in $\Omega$, with $\Pr[\ell_i=l]=P_i(l)$.
We discuss how the distributions over the exponential space $\Omega$ can have compact representations in the form of \textit{choice models}, in Section~\ref{sec::markovChain}.

Our paper also focuses on the case of product-independent feasibility constraint, where the auction is constrained by the set of ``winners'' who are allocated non-zero products, as discussed in the Introduction.  Formally, such a feasibility constraint is described as
\begin{align} \label{eqn::feasibility}
\{i\in M:j_i(\vell)\neq0\}\in\cF &&\forall\vell\in\Omega^m
\end{align}
where $\cF$ is an arbitrary downward-closed feasible family.

We are now ready to define the main problem studied in this paper.
\begin{problem}[Revenue Maximization with Assortment Auctions] \label{prob::aa}
Find the allocation functions $j_1,\ldots,j_m:\Omega^m\to N_+$ which maximize the expected revenue $\bE_{\vell}[\sum_{i=1}^mr_{j_i(\vell)}]$,
subject to truthfulness constraints~\eqref{eqn::incentiveCompatible}-\eqref{eqn::individuallyRational} and feasibility constraint~\eqref{eqn::feasibility}. 
\end{problem}

In the case of a single buyer, Problem~\ref{prob::aa} reduces (via
Proposition~\ref{prop::taxation}) to $\max_{T_1\subseteq N}\bE_{\ell_1}[r_{j_1(\ell_1)}]$, where $j_1(\ell_1)=\argmin_{j\in T_1\cup\{0\}}\ell_1(j)$, which is the basic assortment optimization problem.

We will frequently reference the special case where customers ``buy down'', because it corresponds to known results in the classical auctions setting, which we can then build upon.
In this special case, ranked lists always take the form $(1,\ldots,j)$, for some product $j\in N_+$, whose price $r_j$ corresponds to the customer's maximum willingness-to-pay.
Therefore, we can consider an instance in the classical single-item auction setting where each buyer $i$ has an independent \textit{valuation} $v_i$ that equals $r_j$ with probability $P_i\big((1,\ldots,j)\big)$, for all $j\in N_+$.
Myerson's optimal auction specifies an (ironed) virtual valuation function $\phi_i$ for each buyer $i$, which maps the discrete valuation set $\{r_0,\ldots,r_n\}$ to $\bR$ \cite{elkind2007designing}.
This can then be translated back into an optimal assortment auction, in the special case of buy-down preferences (see Appendix~\ref{apx:secTwo} for details).
\begin{proposition} \label{prop::myePricing}
Suppose that the list distribution $P_i$ for every buyer $i$ is supported within $\{(1,\ldots,j):j\in N_+\}$ and that $\cF=\{M'\subseteq M:|M'|\le1\}$.
Then the optimal auction is:
\begin{itemize}
\item Each buyer $i$, upon reporting list $(1,\ldots,j)$, is assigned a virtual valuation of $\phi_i(r_j)$;
\item Declare the winner to be the buyer $i^*$ with the highest\footnote{With discrete valuations, we can perturb the functions $\phi_1,\ldots,\phi_m$ slightly so that different buyers cannot have the same virtual valuation.  This is equivalent to using an arbitrary deterministic tie-breaking rule.} positive\footnote{Only a positive virtual valuation can win the auction; otherwise no buyer is allocated any product.} virtual valuation, and allocate her the lowest-priced product $j$ whose virtual valuation $\phi_{i^*}(r_j)$ would have won the auction.
\end{itemize}
Moreover, the optimal revenue equals the expected virtual surplus, defined as the
expected value of the maximum virtual valuation (or zero if all virtual valuations are non-positive).
\end{proposition}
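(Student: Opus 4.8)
The plan is to reduce this special case of Problem~\ref{prob::aa} to Myerson's classical optimal single-item auction with discrete valuations, with Proposition~\ref{prop::taxation} serving as the dictionary between the two formulations.

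First I would pin down the correspondence precisely. By Proposition~\ref{prop::taxation}, a deterministic truthful assortment auction is a family of assortment maps $T_i(\vell_{-i})$. Under buy-down preferences, buyer~$i$'s list is $(1,\ldots,j)$ with $v_i:=r_j$ her willingness-to-pay, so when offered $S:=T_i(\vell_{-i})$ she buys $\min S$ if $v_i\ge r_{\min S}$ and buys nothing otherwise; thus only $r_{\min S}$ matters, with the convention $r_{\min\emptyset}=+\infty$. Hence an assortment auction is exactly a collection of threshold-price rules $p_i(\vell_{-i}):=r_{\min T_i(\vell_{-i})}\in\{r_1,\ldots,r_n\}\cup\{+\infty\}$, under which buyer~$i$ wins and pays $p_i(\vell_{-i})$ precisely when $v_i\ge p_i(\vell_{-i})$, and \eqref{eqn::feasibility} with $\cF=\{M'\subseteq M:|M'|\le1\}$ is exactly the single-item constraint. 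Conversely, any deterministic DSIC and IR single-item auction whose winner always pays some $r_k$ arises this way: deterministic DSIC forces a monotone allocation, hence (after a fixed, consistent tie-break) a threshold $t_i(v_{-i})$ which by the payment restriction must be a grid price, and taking $T_i(\vell_{-i})$ to be the singleton $\{k\}$ with $r_k=t_i(v_{-i})$ (or $\emptyset$) reproduces the auction. Expected revenue is preserved throughout.

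Second I would invoke the discrete-valuation form of Myerson's theorem \cite{myerson1981optimal,elkind2007designing}: over all DSIC and IR single-item auctions (deterministic or not), the optimal expected revenue equals the expected virtual surplus $\bE_{\vell}[\max\{0,\phi_1(v_1),\ldots,\phi_m(v_m)\}]$ for the ironed virtual valuation functions $\phi_i$, and this is attained by the deterministic mechanism that awards the item to the buyer with the highest positive $\phi_i(v_i)$ at her critical value. The fact I need from \cite{elkind2007designing} is that this optimal deterministic mechanism can be implemented so that the winner only ever pays a price in $\{r_0,\ldots,r_n\}$, because each $\phi_i$ is a nondecreasing step function on the valuation grid and ties can be broken by a fixed ordering (equivalently, via a tiny perturbation of the $\phi_i$). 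Consequently the payment restriction from the first step is without loss: the best assortment auction attains Myerson's revenue, and it cannot exceed it because it is itself a valid single-item auction. This already establishes that the optimal revenue equals the expected virtual surplus.

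Third I would check that the mechanism stated in the proposition is exactly this translated Myerson mechanism. The virtual-valuation assignment matches by definition of $v_i$; the winner $i^*$ is the buyer with the highest positive $\phi_i(v_i)$; and the critical value for $i^*$ is the smallest grid price $r_k$ at which $\phi_{i^*}(r_k)$ is still positive and still beats the runner-up, which by monotonicity of $\phi_{i^*}$ is precisely ``the lowest-priced product whose virtual valuation $\phi_{i^*}(r_k)$ would have won the auction,'' and since $r_k\le v_{i^*}$ buyer~$i^*$ indeed takes product~$k$ from the singleton assortment and pays $r_k$. I expect the one genuinely load-bearing step to be the claim in Step~2 that Myerson's optimum is implementable charging only the prices $r_0,\ldots,r_n$: this is exactly what makes the fixed prices costless here, and it relies on the discrete-valuation analysis of \cite{elkind2007designing} (ironed virtual valuations as step functions with consistent tie-breaking) rather than on the continuous version of Myerson; everything else is bookkeeping through Proposition~\ref{prop::taxation}.
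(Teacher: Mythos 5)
Your proof is correct and follows essentially the same route as the paper's: both use Proposition~\ref{prop::taxation} to translate between assortment auctions and threshold-price (posted-price) single-item auctions under buy-down preferences, both invoke the discrete-valuation form of Myerson's theorem via \cite{elkind2007designing} to get the grid-price implementation, and both observe that the translation is revenue-preserving in both directions, so the assortment-auction optimum equals the expected virtual surplus. The only stylistic difference is that you make the correspondence $T_i(\vell_{-i})\mapsto r_{\min T_i(\vell_{-i})}$ and its converse fully explicit as a ``dictionary,'' whereas the paper states the same reduction more tersely and leans directly on the footnoted claim that Myerson's discrete-valuation optimum can be implemented with grid-price payments; you correctly identify that footnoted claim as the load-bearing step.
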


\subsection{Implementable and Insurmountable Virtual Valuations} \label{sec::implInsurm}

One way of interpreting Myerson's result for classical auctions is that given any valuation distributions, it is \textit{always} possible to find functions $\phi_1,\ldots,\phi_n$ which are simultaneously:
\begin{itemize}
\item ``Low enough'', in that the expected virtual surplus (defined according to $\phi_1,\ldots,\phi_n$) can be earned by a mechanism which allocates to the highest virtual valuation;
\item ``High enough'', in that the expected virtual surplus is an upper bound on the revenue of any feasible truthful mechanism.
\end{itemize}
We will see in Section~\ref{sec::examples} that for general preference distributions, the two conditions above cannot always be simultaneously satisfied.
Nonetheless, in this section we derive what it means for virtual valuations to be ``low enough'' and ``high enough'' for assortment auctions.

For assortment auctions, the natural generalization of a function $\phi$ which maps valuations to virtual valuations is a function $V$ defined on lists.
\begin{definition}
A \textit{virtual valuation mapping (VVM)} is a function $V:\Omega\to\bR$.  Given a VVM $V_i$ for every buyer $i\in M$ and a downward-closed feasible family $\cF$, define the \textit{expected virtual surplus} as
\begin{align} \label{eqn::evs}
\bE_{\vell}\left[\max_{M'\in\cF}\sum_{i\in M'}V_i(\ell_i)\right].
\end{align}
\end{definition}
For convenience, a VVM $V_i$ will often leave virtual valuations undefined for lists of measure zero.
The virtual valuation of the empty list $\emptyset$ is understood to be $-\infty$, which can always be excluded in the accounting of the virtual surplus since $\cF$ is downward-closed.

We now introduce conditions on individual VVM's $V$, omitting the buyer subscript $i$.
In the classical auctions setting, $\phi$ was a function defined based on a valuation distribution.
Similarly, in our setting, a VVM $V$ is always defined based on a specific list distribution $P$.

\begin{definition} \label{def::RQ}
Fix a list distribution $P$.
For all assortments $S\subseteq N$, define the following:
\begin{itemize}
\item $\cQ^j(S)=\{l\in\Omega:j=\argmin_{j\in S_+}l(j)\}$, the subset of lists $l$ which choose product $j$ when offered assortment $S$, defined for all $j\in S$;
\item $\cQ(S)=\bigcup_{j\in S}\cQ^j(S)$, the subset of lists $l$ which make a purchase when offered $S$;
\item $R(S)=\sum_{j\in S}r_j\cdot P(\cQ^j(S))$, the expected revenue when offering $S$ to distribution $P$;
\item $Q(S)=P(\cQ(S))$, the probability of getting a sale when offering $S$ to distribution $P$.
\end{itemize}
\end{definition}

We now introduce the first of our two conditions on a VVM $V$ for a distribution $P$.
\begin{definition} \label{def::impl}
We say that a virtual valuation mapping $V$ for a distribution $P$ is \textit{implementable} if for any threshold $w\in\bR$, we can find an assortment $S$ such that $\cQ(S)=\{l\in\Omega:V(l)\ge w\}$ and
\begin{align} \label{eqn::impl}
\sum_{l\in\cQ(S)}V(l)P(l)\le R(S).
\end{align}
\end{definition}
It can be checked that the virtual valuations defined for the example in Section~\ref{sec::introEg} satisfy this condition.
Indeed, the relevant thresholds are $w=12,4,3,-1$, and for each of these thresholds we can find the respective assortments $S=\{A\},\{A,D\},\{A,B,D\},\{A,B,C,D\}$ which satisfy~\eqref{eqn::impl} as equality.
Note that in Definition~\ref{def::impl}, it is important for $V(\emptyset)$ to be understood to be $-\infty$, since the empty list $\emptyset$ cannot lie in $\cQ(S)$ for any assortment $S$.

If for every buyer, the VVM defined for her is implementable, then the virtual valuations are ``low enough'' in the sense described earlier.
\begin{lemma} \label{lem::LB}
Suppose that VVM $V_i$ is implementable for distribution $P_i$ for all $i\in M$.
Then revenue equal to the expected virtual surplus~\eqref{eqn::evs} can be attained by a Myersonian mechanism, which on each realization of $\vell$ offers assortments in a way so that buyers in $\argmax_{M'\in\cF}\sum_{i\in M'}V_i(\ell_i)$ are allocated a non-zero product.
\end{lemma}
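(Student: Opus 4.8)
The plan is to construct the Myersonian mechanism explicitly through the Taxation Principle (Proposition~\ref{prop::taxation}): it suffices to specify, for each buyer $i$, an assortment $T_i(\vell_{-i})$ depending only on the other buyers' lists, after which buyer $i$ receives her favorite product from $T_i(\vell_{-i})_+$; any such family of maps yields a deterministic truthful mechanism. The goal is to choose these assortments so that (a) the set of buyers who make a purchase is exactly the argmax set $M^*(\vell):=\argmax_{M'\in\cF}\sum_{i\in M'}V_i(\ell_i)$ (which lies in $\cF$, giving feasibility), and (b) the revenue collected from buyer $i$ matches buyer $i$'s contribution to the expected virtual surplus.

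The first step is a \emph{threshold characterization} of $M^*(\vell)$. Fix buyer $i$ and the others' lists $\vell_{-i}$, and set $C_i:=\max\{\sum_{j\in M'\setminus\{i\}}V_j(\ell_j):M'\in\cF,\ i\in M'\}$ and $B_i:=\max\{\sum_{j\in M'}V_j(\ell_j):M'\in\cF,\ i\notin M'\}$, both functions of $\vell_{-i}$ only (downward-closedness of $\cF$ guarantees $\emptyset\in\cF$, hence $B_i\ge0$). Linearity of the objective shows that the best feasible set containing $i$ has value $V_i(\ell_i)+C_i$ while the best feasible set excluding $i$ has the $\ell_i$-independent value $B_i$. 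Hence, fixing any deterministic tie-breaking rule, ``$i\in M^*(\vell)$'' holds iff $V_i(\ell_i)\ge w_i(\vell_{-i})$ for a threshold $w_i(\vell_{-i})$ (roughly $B_i-C_i$, shifted by at most one value to absorb the tie-breaking; this is clean since $V_i$ takes finitely many values over the finite set $\Omega$, or one may perturb the $V_i$ as in Proposition~\ref{prop::myePricing} to make all comparisons strict). In particular the event ``$i$ wins'' is an up-set in $V_i(\ell_i)$, and $i$ never wins when $V_i(\ell_i)<0$.

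The second step applies implementability. By Definition~\ref{def::impl} for $P_i$ at the threshold $w=w_i(\vell_{-i})$, there is an assortment $S_i(\vell_{-i})$ with $\cQ(S_i(\vell_{-i}))=\{l\in\Omega:V_i(l)\ge w_i(\vell_{-i})\}$ and $\sum_{l\in\cQ(S_i(\vell_{-i}))}V_i(l)P_i(l)\le R(S_i(\vell_{-i}))$. Take $T_i(\vell_{-i}):=S_i(\vell_{-i})$. For feasibility: buyer $i$ is allocated a non-zero product precisely when $\ell_i\in\cQ(T_i(\vell_{-i}))$, i.e.\ when $V_i(\ell_i)\ge w_i(\vell_{-i})$, i.e.\ when $i\in M^*(\vell)$, so the winner set equals $M^*(\vell)\in\cF$. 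For revenue: conditioning on $\vell_{-i}$, buyer $i$ picks her favorite product from the fixed assortment $S_i(\vell_{-i})$, so her conditional expected payment is exactly $R(S_i(\vell_{-i}))$ by Definition~\ref{def::RQ}, which is $\ge\sum_{l\in\cQ(S_i(\vell_{-i}))}V_i(l)P_i(l)=\bE[V_i(\ell_i)\bI\{i\in M^*(\vell)\}\mid\vell_{-i}]$. Summing over $i$ and taking expectations, and using $\sum_iV_i(\ell_i)\bI\{i\in M^*(\vell)\}=\max_{M'\in\cF}\sum_{i\in M'}V_i(\ell_i)$, the mechanism's expected revenue is at least the expected virtual surplus~\eqref{eqn::evs} --- the inequality needed here, the matching reverse inequality being precisely what insurmountability will later supply (and holding automatically when~\eqref{eqn::impl} is tight at each threshold used, as in the example of Section~\ref{sec::introEg}).

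I expect the main obstacle to be the threshold characterization of the first step: for a completely general downward-closed $\cF$ one must argue that a single buyer's membership in the optimal set is monotone in her own virtual valuation while the cut-off depends only on the others, and handle ties in the argmax so consistently that the cut-off is a bona fide real threshold and the induced $T_i$ genuinely depends on $\vell_{-i}$ alone. Everything else --- that reserve-like behavior is built in since $V_i(\ell_i)<0$ never wins, that the empty list (with $V=-\infty$) never lies in any $\cQ(S)$, and that lists of measure zero on which $V_i$ is undefined can be discarded --- is routine once that structure is in hand.
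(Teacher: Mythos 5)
Your proof is correct and follows essentially the same route as the paper's: invoke the taxation principle, perturb the $V_i$ to break ties deterministically, observe that membership in $\argmax_{M'\in\cF}\sum_{i\in M'}V_i(\ell_i)$ is equivalent to $V_i(\ell_i)$ clearing a threshold depending only on $\vell_{-i}$, apply implementability at that threshold to define $T_i(\vell_{-i})$, and then lower-bound the expected revenue by the expected virtual surplus via the tower property. Your explicit derivation of the threshold from $C_i$ and $B_i$ (including the observation that downward-closedness gives $C_i\le B_i$, so negative virtual valuations never win) spells out a step the paper asserts without elaboration, but it is the same argument.
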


The proof of Lemma~\ref{lem::LB} is deferred to Appendix~\ref{apx:secTwo}.  The virtual valuations for the example from Section~\ref{sec::introEg} were defined based on the efficient assortments and the revenue frontier.
If these virtual valuations satisfy implementability (which is not always the case---see Example~\ref{eg::pathological} in Section~\ref{sec::examples}), then they will maximally inflate the expected virtual surplus earned by the Myersonian mechanism.
One may hope that this means the Myersonian mechanism is optimal.
Surprisingly though, it is not sufficient for the optimality of the Myersonian mechanism, which motivates the need for our second condition.
\begin{definition} \label{def::insurm}
We say that a virtual valuation mapping $V$ for distribution $P$ is \textit{insurmountable} if for all assortments $S\subseteq N$,
\begin{align} \label{eqn::insurm}
\sum_{l\in\cQ(S)}V(l)P(l)\ge R(S).
\end{align}
\end{definition}
The virtual valuations defined for the example in Section~\ref{sec::introEg} are also insurmountable, although this is much more difficult to check.
In the special case of buy-down preferences, $\cQ(S)$ always corresponds to a contiguous block of lists exceeding some price threshold, and insurmountability becomes a trivial condition, which is always satisfied by Myerson's ironed virtual valuations.
However, with general preference lists, there are exponentially many possibilities for $\cQ(S)$, and whether insurmountability holds depends on the specific substitution patterns across those preference lists.
In fact, an assortment $S$ can violate~\eqref{eqn::insurm} even when it is ``inefficient'' and has a small value of $R(S)$, especially if $\cQ(S)$ contains many lists with low virtual valuations.
And when~\eqref{eqn::insurm} is violated, a non-Myersonian mechanism can indeed surpass the expected virtual surplus.

Nonetheless, if for every buyer, the VVM defined for her is insurmountable, then the expected virtual surplus cannot be surpassed.
\begin{lemma} \label{lem::UB}
Suppose that VVM $V_i$ is insurmountable for distribution $P_i$ for all $i\in M$.  Then the expected virtual surplus~\eqref{eqn::evs} is an upper bound on the revenue of any feasible truthful mechanism.
\end{lemma}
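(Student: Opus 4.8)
The plan is to apply the Taxation Principle, condition on the other buyers' lists, and invoke insurmountability buyer-by-buyer. Fix any feasible truthful mechanism. By Proposition~\ref{prop::taxation} it is described by assortment functions $T_1,\ldots,T_m$ with $j_i(\vell)=\argmin_{j\in T_i(\vell_{-i})_+}\ell_i(j)$. Since the lists are independent, I would condition on a realization of $\vell_{-i}$: then $S:=T_i(\vell_{-i})$ is a fixed assortment, buyer $i$ makes a purchase exactly when $\ell_i\in\cQ(S)$, and her expected payment is $\sum_{j\in S}r_j\,P_i(\cQ^j(S))=R(S)$ by Definition~\ref{def::RQ} (the no-purchase option contributes $r_0=0$).

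Next, apply insurmountability~\eqref{eqn::insurm} of $V_i$ to this very assortment $S=T_i(\vell_{-i})$, which gives $R(S)\le\sum_{l\in\cQ(S)}V_i(l)P_i(l)=\bE_{\ell_i}\big[V_i(\ell_i)\,\bI[\ell_i\in\cQ(T_i(\vell_{-i}))]\,\big|\,\vell_{-i}\big]$. Since the event $\ell_i\in\cQ(T_i(\vell_{-i}))$ is precisely the event $j_i(\vell)\neq0$, taking expectations over $\vell_{-i}$ and summing over $i\in M$, linearity of expectation yields
\begin{align*}
\Rev=\bE_{\vell}\Big[\textstyle\sum_{i\in M}r_{j_i(\vell)}\Big]\le\bE_{\vell}\Big[\textstyle\sum_{i\in M}V_i(\ell_i)\,\bI[j_i(\vell)\neq0]\Big].
\end{align*}

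Finally I would bound the right-hand integrand pointwise: for each realization of $\vell$, the winner set $W(\vell):=\{i:j_i(\vell)\neq0\}$ lies in $\cF$ by the feasibility constraint~\eqref{eqn::feasibility}, so $\sum_{i\in M}V_i(\ell_i)\,\bI[j_i(\vell)\neq0]=\sum_{i\in W(\vell)}V_i(\ell_i)\le\max_{M'\in\cF}\sum_{i\in M'}V_i(\ell_i)$, the last step simply because $W(\vell)$ is itself a feasible choice of $M'$ (this holds irrespective of the signs of the $V_i(\ell_i)$). Taking expectations gives exactly the expected virtual surplus~\eqref{eqn::evs}, completing the proof. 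I do not expect a real obstacle here — the whole argument is a conditioning-and-linearity computation — but the one point requiring care is the empty list: $V_i(\emptyset)=-\infty$ and $V_i$ may be undefined on further measure-zero lists, yet a buyer reporting $\emptyset$ never purchases, so such lists never enter $\cQ(S)$ nor $W(\vell)$ and never affect any expectation, so no indeterminate $-\infty\cdot0$ term arises. The role of insurmountability is exactly to convert each buyer's conditional expected revenue $R(T_i(\vell_{-i}))$ into a sum of her virtual valuations over her winning lists, after which the feasibility constraint does the rest.
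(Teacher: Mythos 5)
Your proposal is correct and follows essentially the same argument as the paper: apply the taxation principle to express the conditional expected revenue as $R(T_i(\vell_{-i}))$, bound it by insurmountability, sum via the tower property, and then bound the resulting integrand pointwise using the feasibility constraint. The added remark about the empty list is a minor but sensible clarification not spelled out in the paper's proof.
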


The proof of Lemma~\ref{lem::UB} is deferred to Appendix~\ref{apx:secTwo}.  Combining Lemmas~\ref{lem::LB},~\ref{lem::UB}, we see that if for each buyer we can find a VVM which is simultaneously implementable and insurmountable, then a Myersonian mechanism is optimal.

\subsection{Instructive Examples} \label{sec::examples}

We now see an example where the optimal mechanism is non-Myersonian.  It is not possible to find VVM's which are simultaneously implementable and insurmountable.
\begin{example}\label{eg::nonMyersonian}
There are four products $A,B,C,D$ with prices $r_A=4,r_B=2,r_C=r_D=1$.
There are two IID buyers, whose ranking distributions are uniform over lists $(BA),(CBD),(B),(C)$, participating in a single-winner auction.
For this distribution, it can be checked that the revenue frontier is formed by \textbf{nested} assortments $\{A\},\{A,B\},\{A,B,C\}$, which have respective sales probabilities $0.25,0.75,1$ and expected revenues $1,1.5,1.5$, and that the VVM $V$ based on it, defined by $V(BA)=4,V(CBD)=V(B)=1,V(C)=0$, is implementable.
Despite the revenue frontier being formed by nested assortments, $V$ is not insurmountable, because
\begin{align*}
\sum_{l\in\cQ(\{A,C\})}V(l)P(l)=1.25<1.5=R(\{A,C\})
\end{align*}
and hence inequality~\eqref{eqn::insurm} is violated for assortment $\{A,C\}$.
Nonetheless, we can still define virtual valuations for both buyers using $V$, and earn expected virtual surplus 
$
\frac{7\cdot4+8\cdot1}{16}=\frac{36}{16}
$
using a Myersonian mechanism, as shown in Lemma~\ref{lem::LB}.

However, now consider the mechanism specified in Figure~\ref{fig::nonMyersonian}.  It is feasible, truthful, and earns
$
\frac{5r_A+6r_B+4r_C+r_D}{16}=\frac{37}{16},
$
which is greater than $\frac{36}{16}$.
It is easy to see that this mechanism cannot be implemented using virtual valuations.
Indeed, in Figure~\ref{fig::nonMyersonian}, conditioned on both buyers' lists being (B) or (C), buyer~2 wins if and only if buyer 1 reports the \textit{same} list.
By contrast, in a Myersonian mechanism which maximizes virtual valuation, whether buyer~2 wins must be determined by comparing a single number.
\begin{figure}[t]
\centering
\begin{tabular}{c|c||c|c|c|c}
\updown & $\ell_2$ & $(BA)$ & $(CBD)$ & $(B)$ & $(C)$ \\
\hline
\updown $\ell_1$ & $T_i(\ell_{-i})$ & $\{A\}$ & $\{A\}$ & $\{A,C\}$ & $\{A,B\}$ \\
\hline\hline
\updown $(BA)$ & $\emptyset$ & \textbf{A} & \textbf{A} & \textbf{A} & \textbf{B} \\
\hline
\updown $(CBD)$ & $\{A,D\}$ & $A$ & $D$ & \textbf{C} & \textbf{B} \\
\hline
\updown $(B)$ & $\{A,B\}$ & $B$ & $B$ & $B$ & \textbf{B} \\
\hline
\updown $(C)$ & $\{A,C\}$ & $A$ & $C$ & \textbf{C} & $C$ \\
\end{tabular}
\caption{
The optimal non-Myersonian mechanism for this example, which specifies the product allocated for each possible combination of the buyers' lists $\ell_1,\ell_2$.
Bolded entries mean that buyer~1, the row player, wins; non-bolded entries mean that buyer~2 wins.
To see that the mechanism is truthful, the ``taxation'' assortments $T_i(\ell_{-i})$ are shown; e.g.\ $\{A,D\}$ means that when $\ell_1=(CBD)$, buyer~2 gets to choose from assortment $\{A,D\}$.
}
\label{fig::nonMyersonian}
\end{figure}
\end{example}

We also provide an example where virtual valuations based on the revenue frontier, as illustrated in Section~\ref{sec::introEg}, are not implementable.
This leads to a mechanism which may not allocate any product to the buyer with the highest positive virtual valuation, which means that the virtual surplus cannot be earned.
\begin{example}\label{eg::pathological}
There are three products $A,B,C$ with prices $r_A=25,r_B=15,r_C=12$.
The ranking distribution is uniform over lists $(BA),(CB),(B),(B),(C)$.
It can be checked (see calculations in Appendix~\ref{apx:secTwo}) that the revenue frontier is formed by sets $\emptyset,\{A\},\{A,C\},\{A,B\},\{A,B,C\}$ and that there is no way to define corresponding virtual valuations which satisfy Definition~\ref{def::impl}.
\end{example}

Finally, we provide an example where virtual valuations based on \textbf{non-nested} efficient assortments are both implementable and insurmountable.
In this case, even though there is an optimal Myersonian mechanism, less competition from other buyers (in the form of lower virtual valuations) does not necessarily mean that the assortment available for the winner to choose from (as described in Lemma~\ref{lem::LB}) is set-wise increasing.
As a result, less competition could actually result in the winner receiving a less-preferred outcome.
\begin{example}\label{eg::nonNested}
There are three products $A,B,C$ with prices $r_A=6,r_B=3,r_C=2$.
The ranking distribution is uniform over lists $(BA),(CB),(B)$.
It can be checked (see calculations in Appendix~\ref{apx:secTwo}) that the virtual valuations corresponding to the revenue frontier formed by non-nested sets $\emptyset,\{A\},\{A,C\},\{B\}$ are both implementable and insurmountable.
\end{example}

\section{Optimal Assortment Auction for Markov Chains} \label{sec::markovChain}

In this section we derive the optimal assortment auction under Markov Chain choice models, by specifying a procedure for defining a buyer's virtual valuation based on her reported list and list distribution.
Our procedure in essence efficiently constructs the revenue frontier and defines virtual valuations following the example in Section~\ref{sec::introEg}.
We focus on a single buyer and omit the subscript $i$ in this section.

The Markov Chain choice model was introduced in \cite{zhang2005revenue,blanchet2016markov}.
We provide an equivalent definition below, which generates a random ranked list of products according to a Markov chain.

\begin{definition}[Markov Chain Choice Model]
Under a Markov Chain choice model, the list distribution $P$ is implicitly defined in the following way.
There is a Markov Chain with node set $N_+$ (recall that for any set $S$, we defined $S_+=S\cup\{0\}$).
For all nodes $j\in N$ and $j'\in N_+$, the probability of transitioning from node $j$ to node $j'$ is $\rho_{jj'}$.
The outgoing probabilities from every node $j\in N$ satisfy $\sum_{j'\in N_+}\rho_{jj'}=1$, and 0 is a \textit{terminal} node with no outgoing transitions.

To generate a list $\ell\in\Omega$ according to distribution $P$, we start at each node $j'\in N_+$ with probability $\lambda_{j'}$ (these probabilities satisfy $\sum_{j'\in N_+}\lambda_{j'}=1$), in which case we start with the singleton list $(j')$.
We then transition probabilistically along the Markov chain, adding every node visited to the end of the list, but \textit{only if} that node doesn't already appear on the list.
The list immediately ends upon terminal node~0 being reached (and~0 is never added to the list).
It is assumed that 0 is the only absorbing state in the Markov chain, so that $\ell$ terminates with probability 1.
\end{definition}

\begin{definition}[Notation]
In this section, we will use the following notation, which facilitates the analysis of the Markov Chain choice model.
\begin{itemize}
\item For any $j\in N_+$, we will often use $j$ to refer to the singleton set $\{j\}$.
\item Consider the discrete probability space defined by the Markov Chain's distribution $P$.
For any $s\in N_+$ and $S\subseteq N_+\setminus s$, let $s\prec S$ denote the event that node $s$ is visited before any of the nodes in $S$, and let $\bP[s\prec S]$ denote its probability.
\item Similarly, for any $s\in N$, $s'\in N_+\setminus s$, and $S\subseteq N_+\setminus\{s,s'\}$, let $\bP[s\prec s'\prec S]$ denote the probability that $s$ is visited before $s'$, which in turn is visited before any of the nodes in $S$.
Let $\bP_{s}[s'\prec S]$ denote the probability that \textit{starting from node $s$} (instead of starting according to the probabilities $\lambda_k$), $s'$ is visited before any of the nodes in $S$.
\item For $s\in N$, $s'\in N_+\setminus s$, and $S\subseteq N_+\setminus\{s,s'\}$, let $\cL(s\prec s'\prec S)$ denote the subset of lists $l\in\Omega$ for which $s\prec s'\prec S$.  Note that although the list $l$ is truncated upon reaching node 0, since $s\neq0$, whether $s\prec s'\prec S$ is fully determined by $l$.
\end{itemize}
\end{definition}

We are now ready to define our virtual valuation mapping $V:\Omega\to\bR$.
Based on the distribution $P$, our procedure constructs a sequence of products $s^{(1)},\ldots,s^{(K)}$, where each product $s^{(k)}$ maximizes the \textit{incremental efficiency} when added to assortment $S^{(k-1)}=\{s^{(1)},\ldots,s^{(k-1)}\}$.
These efficiencies are computed using \textit{externality-adjusted prices}, which were introduced in \cite{desir2019capacity}.
We emphasize, however, that our procedure is different from theirs, in that our sequence of products forms the revenue frontier while theirs accomplishes a different purpose, as illustrated in Example~\ref{eg::procedure} below.
Having constructed the revenue frontier, we define the virtual valuation of a list $l$ to be the incremental efficiency of the \textit{first} product in our sequence $s^{(1)},\ldots,s^{(K)}$ which appears in $l$.

\begin{definition}[Procedure for Defining Virtual Valuations]\label{def::procedure}\
\begin{itemize}
\item Initialize assortment $S^{(0)}=\emptyset$, and $r^{(0)}_j=r_j$ for all products $j\in N$.
\item For iterations $k=1,2,\ldots$
\begin{enumerate}
\item Set $s^{(k)}$ to be a product $j\in N\setminus S^{(k-1)}$ with $\bP_j[0\prec S^{(k-1)}]\neq0$ which maximizes the \textit{incremental efficiency}, defined as
\begin{align}
\frac{r^{(k-1)}_j}{\bP_j[0\prec S^{(k-1)}]}. \label{eqn::incrEff}
\end{align}
If there are no such products satisfying $\bP_j[0\prec S^{(k-1)}]\neq0$, then STOP.
\item Define the virtual valuation $V(l)$ for every list $l\in\cL(s^{(k)}\prec0\prec S^{(k-1)})$ to be this maximum incremental efficiency, equal to expression~\eqref{eqn::incrEff} with $j=s^{(k)}$. \label{step::setVV}
\item Update the assortment after iteration $k$ to be $S^{(k)}=S^{(k-1)}\cup\{s^{(k)}\}$, and update the \textit{externality-adjusted prices} for all remaining products $j\in N\setminus S^{(k)}$ to be
$$r^{(k)}_j=r^{(k-1)}_j-r^{(k-1)}_{s^{(k)}}\bP_j[s^{(k)}\prec S^{(k-1)}_+].$$
\end{enumerate}
\end{itemize}
After the procedure stops, define $K$ to be last iteration on which virtual valuations were defined, i.e.\ $K=|S^{(k)}|$, which is at most $n$.
\end{definition}

\begin{remark}[Validity of Procedure] \label{rem::procedure}
Note that on each iteration $k$ of the procedure:
\begin{itemize}
\item The lists with virtual valuations already defined contain some product in $\{s^{(1)},\ldots,s^{(k-1)}\}=~S^{(k-1)}$;
\item The new lists for which virtual valuations are defined is $\cL(s^{(k)}\prec0\prec S^{(k-1)})$, which do not contain any product in $S^{(k-1)}$, and hence do not conflict with existing definitions;
\item When the procedure stops, all lists containing some product in $S^{(K)}$ have been covered.
\end{itemize}
Any non-empty lists whose virtual valuations remain undefined at the end must have $j\prec0\prec S^{(K)}$ for some product $j\in N\setminus S^{(K)}$.
By the stopping criterion, $\bP_j[0\prec S^{(K)}]=0$ for any such $j$, which implies that $\bP[j\prec0\prec S^{(K)}]=0$.
Therefore, the non-empty lists $l\in\Omega$ with $V(l)$ undefined have measure zero, and are inconsequential.
\end{remark}

We now make several further remarks.
Note that the procedure from Definition~\ref{def::procedure} does not require knowing the values of $\lambda_{j'}$, because it is only based on probabilities which start at a \textit{specific} node $j$ in the Markov chain.
Also, these probabilities can be tractably computed using the methodology described in \cite{blanchet2016markov}.

Nonetheless, if we fix any values of $\lambda_0,\ldots,\lambda_n$, then~\eqref{eqn::incrEff} can indeed be interpreted as the incremental efficiency.
Take an iteration $k$, where we are considering adding a product $j\in N\setminus S^{(k-1)}$ to assortment $S^{(k-1)}$.
The increase in sales probability, $Q(S^{(k-1)}\cup j)-Q(S^{(k-1)})$, is the measure of the lists in $\cL(j\prec0\prec S^{(k-1)})$, which make a purchase with product $j$ but did not make a purchase without.
The increase in revenue, $R(S^{(k-1)}\cup j)-R(S^{(k-1)})$, is the externality-adjusted price, $r^{(k-1)}_j$, times the probability that $j$ is chosen from assortment $S^{(k-1)}\cup j$, which is $\bP[j\prec S^{(k-1)}_+]$.  (We will formally show this in Proposition~\ref{prop::mcInduction}, but this fact comes from \cite{desir2019capacity}.)
Therefore,
\begin{align}
\frac{R(S^{(k-1)}\cup j)-R(S^{(k-1)})}{Q(S^{(k-1)}\cup j)-Q(S^{(k-1)})}=\frac{r^{(k-1)}_j\bP[j\prec S^{(k-1)}_+]}{\bP[j\prec0\prec S^{(k-1)}]} \label{eqn::vvEqualsIncrEff}
\end{align}
and hence the incremental efficiency is equivalent to expression~\eqref{eqn::incrEff}.

Our virtual valuation generation procedure is different from the iterative assortment optimization procedure of \cite{desir2019capacity} in that at each iteration, it maximizes the incremental efficiency
instead of the externality-adjusted price $r^{(k-1)}_j$.
Their procedure addresses the single-buyer problem under additional capacity and knapsack constraints.
By contrast, our procedure addresses the unconstrained problem for multiple buyers, where the increase in sales probability (corresponding to the term in our denominator) is highly relevant, as it prevents allocations from being made to another buyer.
The following example demonstrates our procedure and the difference from their procedure.

\begin{example}[Illustration of our Procedure and its difference from \cite{desir2019capacity}] \label{eg::procedure}
There are four products $A,B,C,D$ with prices $r_A=6,r_B=5,r_C=4,r_D=3$.
The ranking distribution is uniform over lists $(CBA),(CB),(CD),(D)$.
It is easy to see that this distribution is generated by the Markov chain pictured in Figure~\ref{fig::procedure}.
Iterations $k=1,2,3$ of the procedure are depicted in Figure~\ref{fig::procedure}, with the externality-adjusted prices $r^{(k-1)}_j$ shown on the nodes.

In iteration $k=1$, the highest-priced product~$A$ is selected, with a virtual valuation of $r_A=6$.

In iteration $k=2$, our procedure diverges from what \cite{desir2019capacity} would do, in that it selects product~$B$, instead of product~$D$ which has the highest externality-adjusted price.
This is because we care about minimizing the denominator of~\eqref{eqn::incrEff}, in which $\bP_B[0\prec S^{(1)}]=1/2$ while $\bP_D[0\prec S^{(1)}]=1$, reflecting the fact that we want to minimize the increase in sales probability.
As a result, even though $r^{(1)}_B<r^{(1)}_D$ and $R(S^{(1)}\cup B)<R(S^{(1)}\cup D)$, our procedure adds product~$B$ to assortment $S^{(1)}=\{A\}$, which maximizes the gain in revenue relative to the increase in sales probability.

In iteration $k=3$, our procedure selects product~$D$ and then stops.  Even though the assortment $S^{(3)}=\{A,B,D\}$ does not contain~$C$, the stopping criterion $\bP_C[0\prec S^{(3)}]=0$ is met.
It can also be checked that $r^{(3)}_C=r^{(2)}_C-r^{(2)}_D\bP_C[D\prec S^{(2)}]=-1/3$, which is negative.
\end{example}

\begin{figure}[t]
\centering
\scalebox{.9}{
\begin{tikzpicture}
\node[draw, ellipse, minimum width={75pt}] (A) {$\mathbf{r^{(0)}_A=6}$};
\node at ($(A)-(2,0)$) {$k=1$:};
\node[right = of A, draw, ellipse, minimum width={75pt}] (B) {$r^{(0)}_B=5$};
\node[right = of B, draw, ellipse, minimum width={75pt}] (C) {$r^{(0)}_C=4$};
\node[right = of C, draw, ellipse, minimum width={75pt}] (D) {$r^{(0)}_D=3$};
\node[draw, circle] (0) at ($(B)!0.5!(C)-(0,1.5)$) {0};
\draw[->](C)--node[midway, fill=white]{\footnotesize $\frac{2}{3}$}(B);
\draw[->](C)--node[midway, fill=white]{\footnotesize $\frac{1}{3}$}(D);
\draw[->](B)--node[midway, fill=white]{\footnotesize $\frac{1}{2}$}(A);
\draw[->](B)--node[midway, fill=white]{\footnotesize $\frac{1}{2}$}(0);
\draw[->](A)--(0);
\draw[->](D)--(0);
\draw[->]($(C)+(0,0.75)$)node[above]{\footnotesize $\lambda=\frac{3}{4}$}--(C);
\draw[->]($(D)+(0,0.75)$)node[above]{\footnotesize $\lambda=\frac{1}{4}$}--(D);
\node[align=left] (V) at ($(D)-(0,1.5)$) {$\frac{r^{(0)}_A}{\bP_A[0\prec S^{(0)}]}=\frac{6}{1}=6$, $\mathbf{s^{(1)}=A}$};
\end{tikzpicture}
}


\scalebox{.9}{
\begin{tikzpicture}
\node[draw, ellipse, minimum width={75pt}] (A) {$A$};
\node at ($(A)-(2,0)$) {$k=2$:};
\node[right = of A, draw, ellipse, minimum width={75pt}] (B) {$r^{(1)}_B=2$};
\node[right = of B, draw, ellipse, minimum width={75pt}] (C) {$r^{(1)}_C=2$};
\node[right = of C, draw, ellipse, minimum width={75pt}] (D) {$r^{(1)}_D=3$};
\node[draw, circle] (0) at ($(B)!0.5!(C)-(0,1.5)$) {0};
\draw[->](C)--node[midway, fill=white]{\footnotesize $\frac{2}{3}$}(B);
\draw[->](C)--node[midway, fill=white]{\footnotesize $\frac{1}{3}$}(D);
\draw[->](B)--node[midway, fill=white]{\footnotesize $\frac{1}{2}$}(A);
\draw[->](B)--node[midway, fill=white]{\footnotesize $\frac{1}{2}$}(0);
\draw[->](A)--(0);
\draw[->](D)--(0);
\draw[->]($(C)+(0,0.75)$)node[above]{\footnotesize $\lambda=\frac{3}{4}$}--(C);
\draw[->]($(D)+(0,0.75)$)node[above]{\footnotesize $\lambda=\frac{1}{4}$}--(D);
\node[align=left] (V) at ($(D)-(0,1.5)$) {$\frac{r^{(1)}_B}{\bP_B[0\prec S^{(1)}]}=\frac{2}{1/2}=4$, $\mathbf{s^{(2)}=B}$};
\end{tikzpicture}
}

\scalebox{.9}{
\begin{tikzpicture}
\node[draw, ellipse, minimum width={75pt}] (A) {$A$};
\node at ($(A)-(2,0)$) {$k=3$:};
\node[right = of A, draw, ellipse, minimum width={75pt}] (B) {$B$};
\node[right = of B, draw, ellipse, minimum width={75pt}] (C) {$r^{(2)}_C=\frac{2}{3}$};
\node[right = of C, draw, ellipse, minimum width={75pt}] (D) {$r^{(2)}_D=3$};
\node[draw, circle] (0) at ($(B)!0.5!(C)-(0,1.5)$) {0};
\draw[->](C)--node[midway, fill=white]{\footnotesize $\frac{2}{3}$}(B);
\draw[->](C)--node[midway, fill=white]{\footnotesize $\frac{1}{3}$}(D);
\draw[->](B)--node[midway, fill=white]{\footnotesize $\frac{1}{2}$}(A);
\draw[->](B)--node[midway, fill=white]{\footnotesize $\frac{1}{2}$}(0);
\draw[->](A)--(0);
\draw[->](D)--(0);
\draw[->]($(C)+(0,0.75)$)node[above]{\footnotesize $\lambda=\frac{3}{4}$}--(C);
\draw[->]($(D)+(0,0.75)$)node[above]{\footnotesize $\lambda=\frac{1}{4}$}--(D);
\node[align=left] (V) at ($(D)-(0,1.5)$) {$\frac{r^{(2)}_D}{\bP_D[0\prec S^{(2)}]}=\frac{3}{1}=3$, $\mathbf{s^{(2)}=D}$};
\end{tikzpicture}
}

\caption{
The demonstration of our procedure from Definition~\ref{def::procedure} on Example~\ref{eg::procedure}.
In the Markov chain, any arrival probabilities $\lambda$ or transition probabilities $\rho$ not depicted are 0.
In each iteration $k$, the highest incremental efficiency (given by~\eqref{eqn::incrEff}) and the product selected $s^{(k)}$ are shown.
}
\label{fig::procedure}
\end{figure}

\subsection{Implementability and Insurmountability of Procedure} \label{sec::mcImplInsurm}

We now build toward showing that the virtual valuation mapping $V:\Omega\to\bR$ constructed in Definition~\ref{def::procedure} is both implementable and insurmountable.

\begin{definition} \label{def::kappaV}
Consider the procedure from Definition~\ref{def::procedure}.
\begin{enumerate}
\item For a
list
$\ell$, let $\kappa(\ell)$ denote the index of the first product in sequence $s^{(1)},\ldots,s^{(K)}$ to appear in $\ell$, which is well-defined (see Remark~\ref{rem::procedure}) as long as $\ell\neq\emptyset$ and has positive measure.
\item For all $k=1,\ldots,K$, let $V^{(k)}$ denote the virtual valuation of lists $\ell$ with $\kappa(\ell)=k$, equal to expression~\eqref{eqn::incrEff} with $j=s^{(k)}$.
Note that if $Q(S^{(k)})-Q(S^{(k-1)})\neq0$ (which is not guaranteed by the condition $\bP_{s^{(k)}}[0\prec S^{(k-1)}]\neq0$ from the procedure), then $V^{(k)}$ can be interpreted as the incremental efficiency in~\eqref{eqn::vvEqualsIncrEff}, with $V^{(k)}=\frac{R(S^{(k)})-R(S^{(k-1)})}{Q(S^{(k)})-Q(S^{(k-1)})}$.
\end{enumerate}
\end{definition}

\begin{proposition}\label{prop::decreasingVV}
Every product $j\in N$ falls under one of the following two categories:
\begin{enumerate}
\item $j=s^{(k)}$ for some iteration $k=1,\ldots,K$, in which case for all $k'>k$,
$
V^{(k')}\le V^{(k)};
$
\item $j\notin S^{(K)}$, in which case there is some $k$ for which $\bP_j[0\prec S^{(k)}]=0<\bP_j[0\prec S^{(k-1)}]$, with
$0\ge r^{(k)}_j=r^{(k+1)}_j=\cdots=r^{(K)}_j.$
\end{enumerate}
\end{proposition}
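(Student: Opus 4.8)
The plan is to isolate a single-iteration identity, combine it with the greedy optimality that is built into the procedure, and then iterate; both categories fall out of the same computation.

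\textbf{A one-step identity and the greedy bound.}
Fix an iteration $k\in\{1,\dots,K\}$, write $s=s^{(k)}$, $S=S^{(k-1)}$, and $c=\bP_{s}[0\prec S]$ (which is positive, by the procedure), so that $V^{(k)}=r^{(k-1)}_{s}/c$. For any product $j\in N\setminus S^{(k)}$, the strong Markov property at the first visit to $s$ gives $\bP_j[s\prec0\prec S]=\bP_j[s\prec S_+]\cdot c$; and since the event $\{0\prec S\}$ under $\bP_j$ splits disjointly into $\{0\prec S\cup\{s\}\}$ and $\{s\prec0\prec S\}$, we obtain
\[
\bP_j[0\prec S^{(k-1)}]=\bP_j[0\prec S^{(k)}]+\bP_j[s\prec S_+]\,c.
\]
Substituting this into the externality update $r^{(k)}_j=r^{(k-1)}_j-r^{(k-1)}_{s}\,\bP_j[s\prec S_+]$ and using $r^{(k-1)}_{s}=V^{(k)}c$ rewrites the update as $r^{(k)}_j=r^{(k-1)}_j-V^{(k)}\bigl(\bP_j[0\prec S^{(k-1)}]-\bP_j[0\prec S^{(k)}]\bigr)$. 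Separately, whenever $j$ is \emph{eligible} at iteration $k$, meaning $\bP_j[0\prec S^{(k-1)}]\neq0$, the fact that $s^{(k)}$ rather than $j$ maximizes incremental efficiency gives $r^{(k-1)}_j\le V^{(k)}\,\bP_j[0\prec S^{(k-1)}]$.

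\textbf{Category (1).}
It is enough to prove $V^{(k+1)}\le V^{(k)}$ whenever $k+1\le K$; the stated inequality then follows by transitivity. Apply the above to $j=s^{(k+1)}$, which lies in $N\setminus S^{(k)}$ and is eligible at iteration $k+1$, so $a:=\bP_j[0\prec S^{(k)}]>0$. Because $S^{(k-1)}\subseteq S^{(k)}$ implies $\bP_j[0\prec S^{(k-1)}]\ge a>0$, $j$ is also eligible at iteration $k$, and the rewritten update together with the greedy bound gives $r^{(k)}_j=r^{(k-1)}_j-V^{(k)}\bigl(\bP_j[0\prec S^{(k-1)}]-a\bigr)\le V^{(k)}a$. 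Dividing by $a>0$ yields $V^{(k+1)}=r^{(k)}_j/a\le V^{(k)}$.

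\textbf{Category (2).}
Let $j\notin S^{(K)}$. Since $\bP_j[0\prec S^{(0)}]=\bP_j[0\prec\emptyset]=1$ (the node $0$ is the unique absorbing state) while $\bP_j[0\prec S^{(K)}]=0$ (by the stopping criterion, as $j$ is never selected), and $k\mapsto\bP_j[0\prec S^{(k)}]$ is non-increasing, there is a least index $k$ with $\bP_j[0\prec S^{(k)}]=0<\bP_j[0\prec S^{(k-1)}]$; this is the $k$ in the statement. As $j$ is eligible at iteration $k$ and $s^{(k)}\neq j$, the greedy bound gives $r^{(k-1)}_j\le V^{(k)}\bP_j[0\prec S^{(k-1)}]$, so feeding $\bP_j[0\prec S^{(k)}]=0$ into the rewritten update yields $r^{(k)}_j=r^{(k-1)}_j-V^{(k)}\bP_j[0\prec S^{(k-1)}]\le0$. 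To see the price freezes after $k$, note that for $k<k'\le K$ we have $S^{(k'-1)}\supseteq S^{(k)}$, hence $\bP_j[0\prec S^{(k'-1)}]=0$; if $\bP_j[s^{(k')}\prec S^{(k'-1)}_+]$ were positive, then concatenating via the Markov property a positive-probability $j$-to-$s^{(k')}$ trajectory avoiding $S^{(k'-1)}_+$ with a positive-probability $s^{(k')}$-to-$0$ trajectory avoiding $S^{(k'-1)}$ (which exists since $s^{(k')}$ is eligible at iteration $k'$) would produce a positive-probability $j$-to-$0$ trajectory avoiding $S^{(k'-1)}$, a contradiction. Thus $\bP_j[s^{(k')}\prec S^{(k'-1)}_+]=0$ and the update leaves $r^{(k')}_j=r^{(k'-1)}_j$, giving $0\ge r^{(k)}_j=r^{(k+1)}_j=\cdots=r^{(K)}_j$.

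\textbf{Expected obstacle.}
The delicate part is Category (2): the vanishing of $\bP_j[0\prec S^{(k)}]$ does \emph{not} kill the term $\bP_j[s^{(k)}\prec S^{(k-1)}_+]$ in the price update — rather, by the one-step identity that term now carries the full mass $\bP_j[0\prec S^{(k-1)}]/c$ — so the bound $r^{(k)}_j\le0$ emerges only from pairing the update with the greedy inequality, and the ``prices freeze'' claim needs the path-concatenation argument exploiting that $0$ is absorbing rather than a direct computation. Everything else (the one-step identity, the monotonicity of incremental efficiency) is routine once the decomposition $\bP_j[0\prec S]=\bP_j[0\prec S\cup\{s\}]+\bP_j[s\prec S_+]\,\bP_s[0\prec S]$ is in hand.
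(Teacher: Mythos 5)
Your proof is correct and follows essentially the same route as the paper: the same Markov decomposition $\bP_j[0\prec S^{(k-1)}]=\bP_j[0\prec S^{(k)}]+\bP_j[s^{(k)}\prec S^{(k-1)}_+]\,\bP_{s^{(k)}}[0\prec S^{(k-1)}]$, the same greedy inequality, and the same vanishing-probability argument for freezing the externality-adjusted price. The only (cosmetic) difference is that you establish monotonicity via a single-step inequality $V^{(k+1)}\le V^{(k)}$ applied to $j=s^{(k+1)}$ and then invoke transitivity, whereas the paper tracks the decreasing efficiency of a fixed eligible product $j$ across iterations and specializes when $j$ is eventually selected.
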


The dichotomy identified in Proposition~\ref{prop::decreasingVV} is important for our analysis.
The key ingredient in its proof (see Appendix~\ref{apx:mc}) is that at each iteration $k$, by adding the \textit{most efficient} product $s^{(k)}$ to the assortment $S^{(k-1)}$, the efficiencies of the other products can only go down.
This leads to the first part of Proposition~\ref{prop::decreasingVV}, saying that the virtual valuations defined by our procedure are monotone with $V^{(1)}\ge\cdots\ge V^{(K)}$.
Note that these virtual valuations can eventually be negative.

The second statement in Proposition~\ref{prop::decreasingVV} says that any product which never gets added to the assortment in our procedure must have been ``killed'' at some iteration $k$, where $\bP_j[0\prec S^{(k)}]$ decreased to 0 after $s^{(k)}$ was added.
It says that the externality-adjusted revenue $r^{(k)}_j$ must be non-positive at the end of this iteration $k$, and moreover, never changes again from future updates.
These products $j\in N\setminus S^{(K)}$ which are not added by the end
can be interpreted to have a virtual valuation of $-\infty$, since in~\eqref{eqn::incrEff}, the numerator $r^{(K)}_j\le0$, and the denominator $\bP_j[0\prec S^{(K)}]=0$.

Now we establish some statements about the true revenues $R(S)$ of assortments $S$, which do depend on these values.
These statements, in conjunction with Proposition~\ref{prop::decreasingVV}, will allow us to establish implementability and insurmountability.

\begin{proposition} \label{prop::mcInduction}
For any $k=0,1,\ldots,K$ and $S\subseteq N\setminus S^{(k)}$, we have that
$
R(S^{(k)}\cup S)-R(S^{(k)})=\sum_{j\in S}r^{(k)}_j\bP[j\prec(S^{(k)}\cup S\setminus j)_+].
$
\end{proposition}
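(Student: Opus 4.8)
The plan is to prove Proposition~\ref{prop::mcInduction} by induction on $k$, isolating the inductive step into a single identity about first-passage probabilities in the Markov chain that is then settled by memorylessness. For the base case $k=0$ we have $S^{(0)}=\emptyset$ and $r^{(0)}_j=r_j$, so the claim reads $R(S)=\sum_{j\in S}r_j\,\bP[j\prec(S\setminus j)_+]$. This is immediate from Definition~\ref{def::RQ}: $R(S)=\sum_{j\in S}r_jP(\cQ^j(S))$, and under the Markov Chain generation of $P$ a realized list chooses $j$ out of $S_+$ precisely when $j$ is reached before every node of $(S\setminus j)\cup\{0\}$, i.e.\ $P(\cQ^j(S))=\bP[j\prec(S\setminus j)_+]$, and $(S^{(0)}\cup S\setminus j)_+=(S\setminus j)_+$.

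For the inductive step, assume the statement for $k-1$ and write $s=s^{(k)}$, so $S^{(k)}=S^{(k-1)}\cup\{s\}$. I would apply the inductive hypothesis twice over the base assortment $S^{(k-1)}$: once with add-on set $\{s\}\cup S$ and once with add-on set $\{s\}$ (both lie in $N\setminus S^{(k-1)}$). Since $S^{(k-1)}\cup\{s\}\cup S=S^{(k)}\cup S$ and $S^{(k-1)}\cup\{s\}=S^{(k)}$, subtracting the two instances makes the revenue terms telescope and gives
\[
R(S^{(k)}\cup S)-R(S^{(k)})=\sum_{j\in S}r^{(k-1)}_j\,\bP[j\prec(S^{(k)}\cup S\setminus j)_+]-r^{(k-1)}_s\big(\bP[s\prec S^{(k-1)}_+]-\bP[s\prec(S^{(k-1)}\cup S)_+]\big).
\]
Substituting the update rule $r^{(k)}_j=r^{(k-1)}_j-r^{(k-1)}_s\,\bP_j[s\prec S^{(k-1)}_+]$ into the target right-hand side $\sum_{j\in S}r^{(k)}_j\,\bP[j\prec(S^{(k)}\cup S\setminus j)_+]$ and comparing with the display above, the whole step reduces to the identity
\[
\bP[s\prec S^{(k-1)}_+]-\bP[s\prec(S^{(k-1)}\cup S)_+]=\sum_{j\in S}\bP_j[s\prec S^{(k-1)}_+]\,\bP[j\prec(S^{(k)}\cup S\setminus j)_+].
\]

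The heart of the argument is this last identity, which I would prove directly (so that the value of $r^{(k-1)}_s$, possibly zero, is irrelevant). Its left-hand side is the probability of the event $E$ that $s$ is reached before all of $S^{(k-1)}\cup\{0\}$ but \emph{not} before all of $S^{(k-1)}\cup S\cup\{0\}$; since on $E$ the node $s$ already beats $S^{(k-1)}\cup\{0\}$, the second failure forces some node of $S$ to appear before $s$, so on $E$ the first node visited among $S^{(k)}\cup S\cup\{0\}=S^{(k-1)}\cup\{s\}\cup S\cup\{0\}$ lies in $S$. I would then partition $E$ by this first-hit node $j\in S$: the event ``$j$ is the first of $S^{(k)}\cup S\cup\{0\}$ to be visited'' has probability $\bP[j\prec(S^{(k)}\cup S\setminus j)_+]$ and is determined by the trajectory up to the first visit of $j$, so by memorylessness the continuation from $j$ is a fresh run of the chain started at $j$; since none of $S^{(k-1)}\cup\{0\}$ has yet been visited, the residual requirement of $E$ is exactly ``$s$ before $S^{(k-1)}\cup\{0\}$'' in that continuation, of conditional probability $\bP_j[s\prec S^{(k-1)}_+]$. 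Summing over $j\in S$ yields the identity. Along the way one checks the usual disjointness conditions ($s\notin S^{(k-1)}\cup S$, and each $j\in S$ has $j\notin S^{(k)}_+$, which also makes $\bP_j[s\prec S^{(k-1)}_+]$ well-defined), and that lists of measure zero never affect these probabilities.

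The telescoping and the substitution of the update rule are routine; the main obstacle is the probabilistic identity, and within it the two delicate points: (i) arguing that on the event $E$ the global first hit among $S^{(k)}\cup S\cup\{0\}$ is necessarily in $S$, and (ii) invoking the strong Markov property cleanly to factor $\bP[E]$ conditional on that first-hit node. Everything else is bookkeeping with the sets $S^{(k)}=S^{(k-1)}\cup\{s\}$.
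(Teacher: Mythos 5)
Your proof is correct and follows essentially the same route as the paper: induction on $k$, a double application of the inductive hypothesis with add-on sets $\{s^{(k)}\}\cup S$ and $\{s^{(k)}\}$, substitution of the price-update rule, and reduction to a first-passage identity settled by memorylessness. The only cosmetic difference is that the paper isolates that identity as a standalone lemma (Proposition~\ref{prop::keyCalculus}, applied with $A=S^{(k-1)}_+$, $B=S$, $c=s^{(k)}$) with the same first-hit decomposition argument, whereas you prove it in-line.
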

Proposition~\ref{prop::mcInduction} can be derived from \cite{desir2019capacity} by interpreting the RHS as the revenue of assortment $S$ on an ``externality-adjusted instance'', in which products $j\in S^{(k)}$ have been made equivalent to the buyer choosing the zero-revenue no-purchase option, and products $j\notin S^{(k)}$ have had their prices replaced by $r^{(k)}_j$.  We provide a self-contained proof in Appendix~\ref{apx:mc}.

\begin{proposition} \label{prop::vvRelMaintained}
For any $k=0,1,\ldots,K$, we have that $R(S^{(k)})=\sum_{l\in\cQ(S^{(k)})}V(l)P(l)$.
\end{proposition}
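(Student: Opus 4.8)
The plan is to decompose $\cQ(S^{(k)})$ along the iterations of the procedure and then telescope twice. First I would dispose of the base case: $\cQ(\emptyset)=\emptyset$ and $R(\emptyset)=0$, so the identity holds trivially for $k=0$; for $k\ge1$ it suffices to match the increments over iterations $k'=1,\dots,k$.

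Fix $k'\ge1$. Since $S^{(k'-1)}\subseteq S^{(k')}=S^{(k'-1)}\cup\{s^{(k')}\}$, any list that purchases from $S^{(k'-1)}$ also purchases from $S^{(k')}$, so $\cQ(S^{(k'-1)})\subseteq\cQ(S^{(k')})$; and a list lies in $\cQ(S^{(k')})\setminus\cQ(S^{(k'-1)})$ precisely when it picks $s^{(k')}$ out of $S^{(k')}_+$ (it cannot pick anything from $S^{(k'-1)}$) while picking $0$ out of $S^{(k'-1)}_+$, i.e.\ precisely when $s^{(k')}\prec0\prec S^{(k'-1)}$. Hence $\cQ(S^{(k')})\setminus\cQ(S^{(k'-1)})=\cL(s^{(k')}\prec0\prec S^{(k'-1)})$, and by Step~\ref{step::setVV} of the procedure $V$ equals $V^{(k')}$ identically on this set. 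Telescoping these identities down from $\cQ(S^{(0)})=\emptyset$ shows $\cQ(S^{(k)})$ is the disjoint union $\bigcup_{k'=1}^{k}\cL(s^{(k')}\prec0\prec S^{(k'-1)})$, so that $\sum_{l\in\cQ(S^{(k)})}V(l)P(l)=\sum_{k'=1}^{k}V^{(k')}\bP[s^{(k')}\prec0\prec S^{(k'-1)}]$, where I also use $Q(S^{(k')})-Q(S^{(k'-1)})=\bP[s^{(k')}\prec0\prec S^{(k'-1)}]$ from the same set identity.

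It remains to show each summand equals $R(S^{(k')})-R(S^{(k'-1)})$; a final telescoping with $R(\emptyset)=0$ then gives $\sum_{l\in\cQ(S^{(k)})}V(l)P(l)=R(S^{(k)})-R(\emptyset)=R(S^{(k)})$. Applying Proposition~\ref{prop::mcInduction} with the singleton $S=\{s^{(k')}\}$ at level $k'-1$ yields $R(S^{(k')})-R(S^{(k'-1)})=r^{(k'-1)}_{s^{(k')}}\bP[s^{(k')}\prec S^{(k'-1)}_+]$. The crux — and the step I expect to be the main obstacle — is the Markov-chain identity $\bP[s^{(k')}\prec0\prec S^{(k'-1)}]=\bP[s^{(k')}\prec S^{(k'-1)}_+]\cdot\bP_{s^{(k')}}[0\prec S^{(k'-1)}]$, which I would establish via the strong Markov property: on the event that $s^{(k')}$ is hit before every node of $S^{(k'-1)}\cup\{0\}$, the continuation from $s^{(k')}$ is a fresh copy of the chain, and conditional on this event the further event $0\prec S^{(k'-1)}$ occurs with probability exactly $\bP_{s^{(k')}}[0\prec S^{(k'-1)}]$; one also checks that $s^{(k')}\prec S^{(k'-1)}_+$ followed by $0\prec S^{(k'-1)}$ is the same event as $s^{(k')}\prec0\prec S^{(k'-1)}$. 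Since the procedure selects $s^{(k')}$ with $\bP_{s^{(k')}}[0\prec S^{(k'-1)}]\neq0$, the quantity $V^{(k')}=r^{(k'-1)}_{s^{(k')}}/\bP_{s^{(k')}}[0\prec S^{(k'-1)}]$ is a finite real, and multiplying through gives $V^{(k')}\bP[s^{(k')}\prec0\prec S^{(k'-1)}]=r^{(k'-1)}_{s^{(k')}}\bP[s^{(k')}\prec S^{(k'-1)}_+]=R(S^{(k')})-R(S^{(k'-1)})$; this stays valid when $\bP[s^{(k')}\prec S^{(k'-1)}_+]=0$, in which case both sides vanish (and so does $\bP[s^{(k')}\prec0\prec S^{(k'-1)}]$). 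Summing over $k'=1,\dots,k$ completes the argument.
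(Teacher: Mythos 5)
Your argument is essentially the paper's proof: the paper runs the same computation as an induction on $k$, applying Proposition~\ref{prop::mcInduction} with the singleton $\{s^{(k)}\}$, identifying $\cQ(S^{(k)})\setminus\cQ(S^{(k-1)})=\cL(s^{(k)}\prec0\prec S^{(k-1)})$, and using the memorylessness factorization $\bP[s^{(k)}\prec0\prec S^{(k-1)}]=\bP[s^{(k)}\prec S^{(k-1)}_+]\bP_{s^{(k)}}[0\prec S^{(k-1)}]$; you merely unroll the induction into a telescoping sum. Your explicit justification of the factorization via the strong Markov property and your handling of the $\bP[s^{(k')}\prec S^{(k'-1)}_+]=0$ edge case are a bit more careful than the paper's terse treatment, but the route is the same.
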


Proposition~\ref{prop::vvRelMaintained} is proved
in Appendix~\ref{apx:mc}, and says that for any of the assortments $S^{(k)}$, its revenue matches the ``integral'' of the virtual surplus, $\sum_{l\in\cQ(S^{(k)})}V(l)P(l)$, over lists $l$ which make a purchase from $S^{(k)}$.
Having derived Proposition~\ref{prop::vvRelMaintained}, we are now ready to see that implementability holds.

\begin{corollary}[Implementability] \label{cor::mcImpl}
The VVM $V$ defined
constructed
for
list distribution
$P$ by Definition~\ref{def::procedure} is implementable.
\end{corollary}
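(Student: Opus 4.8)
The plan is to take the witnessing assortments required by Definition~\ref{def::impl} from among the nested family $\emptyset = S^{(0)} \subset S^{(1)} \subset \cdots \subset S^{(K)}$ produced by the procedure, and to show that inequality~\eqref{eqn::impl} in fact holds with equality. The two inputs I would rely on are part~1 of Proposition~\ref{prop::decreasingVV}, which gives that the virtual valuations attached to these assortments are non-increasing, $V^{(1)} \ge V^{(2)} \ge \cdots \ge V^{(K)}$, and Proposition~\ref{prop::vvRelMaintained}, which gives $R(S^{(k)}) = \sum_{l \in \cQ(S^{(k)})} V(l) P(l)$ for every $k = 0, 1, \ldots, K$.

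First I would record the elementary combinatorial fact that, for each $k$, a positive-measure list $l$ makes a purchase from $S^{(k)} = \{s^{(1)}, \ldots, s^{(k)}\}$ if and only if $l$ contains one of $s^{(1)}, \ldots, s^{(k)}$, that is, if and only if $\kappa(l) \le k$, where $\kappa$ is as in Definition~\ref{def::kappaV}. Hence $\cQ(S^{(k)}) = \{l \in \Omega : \kappa(l) \le k\}$, up to the measure-zero lists on which $V$ is left undefined, which by Remark~\ref{rem::procedure} are inconsequential and which, like the empty list, we understand to carry virtual valuation $-\infty$. Since $V(l) = V^{(\kappa(l))}$, the monotonicity of the $V^{(k)}$ then lets me describe every superlevel set of $V$ within this family: given a threshold $w \in \bR$ with $w \le V^{(1)}$, set $k(w) := \max\{k \in \{1, \ldots, K\} : V^{(k)} \ge w\}$, so that $V^{(\kappa(l))} \ge w$ exactly when $\kappa(l) \le k(w)$, whence $\{l : V(l) \ge w\} = \cQ(S^{(k(w))})$; ties among the $V^{(k)}$ are absorbed by taking the largest such index. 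When $w > V^{(1)}$, both $\{l : V(l) \ge w\}$ and $\cQ(\emptyset)$ are empty, so $S = S^{(0)} = \emptyset$ suffices.

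It then only remains to verify~\eqref{eqn::impl} for the assortment chosen above. For $S = S^{(k(w))}$ this reads $\sum_{l \in \cQ(S^{(k(w))})} V(l) P(l) \le R(S^{(k(w))})$, which is an equality by Proposition~\ref{prop::vvRelMaintained}; for $S = \emptyset$ both sides are $0$. This gives implementability.

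I do not anticipate a genuine obstacle, since the substance has been isolated into Propositions~\ref{prop::decreasingVV} and~\ref{prop::vvRelMaintained}; the only delicate points are (a) that the superlevel sets of $V$ coincide \emph{exactly} with the sets $\cQ(S^{(k)})$ --- this is precisely where monotonicity of $V^{(1)}, \ldots, V^{(K)}$ is needed, and a non-monotone sequence of virtual valuations would break the correspondence --- and (b) the bookkeeping around ties and around the measure-zero lists with $V$ undefined, both handled as above.
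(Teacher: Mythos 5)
Your proposal is correct and follows essentially the same route as the paper: monotonicity of the $V^{(k)}$ from Proposition~\ref{prop::decreasingVV} identifies each superlevel set $\{l : V(l) \ge w\}$ with some $\cQ(S^{(k)})$, and Proposition~\ref{prop::vvRelMaintained} then gives~\eqref{eqn::impl} as an equality. The bookkeeping around ties and the measure-zero undefined lists is handled the same way (and, in fact, since those lists contain no product of $S^{(K)}$, they lie in neither $\cQ(S^{(k)})$ nor the superlevel set, so the identification is exact rather than merely up to measure zero).
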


Corollary~\ref{cor::mcImpl} holds because by Proposition~\ref{prop::decreasingVV}, the virtual valuations defined satisfy $V^{(1)}\ge\cdots\ge V^{(K)}$.
Therefore, for any threshold $w\in\bR$, the set $\{l\in\Omega:V(l)\ge w\}$ takes the form $\{l\in\Omega:\kappa(l)\le k\}$ for some $k=0,1,\ldots,K$.
By construction in Definition~\ref{def::procedure}, assortment $S^{(k)}$ has the desired $\cQ(S^{(k)})$.
Moreover, by Proposition~\ref{prop::vvRelMaintained}, condition~\eqref{eqn::impl} on $R(S^{(k)})$ is satisfied as equality.

In Corollary~\ref{cor::mcImpl}, the assortments used to satisfy the condition of implementability happen to be nested in that $S^{(0)}\subsetneq\cdots\subsetneq S^{(K)}$.
However, even with this stronger property, implementability does not imply insurmountability, as discussed in Section~\ref{sec::implInsurm}.
Instead, the following theorem establishes insurmountability by inductively adding elements to any assortment $S$ until it becomes one of the assortments $S^{(k)}$, and showing that the value of $\sum_{l\in\cQ(S)}V(l)P(l)-R(S)$ can only go down, as long as the lowest-indexed missing element is added each time.
The proof is then completed by Proposition~\ref{prop::vvRelMaintained}, which says that $\sum_{l\in\cQ(S)}V(l)P(l)-R(S)$ equals 0 when $S$ is one of the assortments $S^{(k)}$.
The full details of the proof are deferred to Appendix~\ref{apx:mc}.

\begin{theorem}[Insurmountability] \label{thm::mcInsurm}
The VVM $V$ constructed for list distribution $P$ by Definition~\ref{def::procedure} is insurmountable; that is, $R(S)\le\sum_{l\in\cQ(S)}V(l)P(l)$ for all assortments $S\subseteq N$.
\end{theorem}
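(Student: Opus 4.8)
The plan is to induct on $|S^{(K)}\setminus S|$, the number of ``frontier'' products $s^{(1)},\ldots,s^{(K)}$ missing from $S$, and to show that $\Phi(S):=\sum_{l\in\cQ(S)}V(l)P(l)-R(S)\ge0$ throughout; the claimed statement is exactly $\Phi(S)\ge0$. In the base case $S\supseteq S^{(K)}$, write $S=S^{(K)}\cup S''$ with $S''\subseteq N\setminus S^{(K)}$. Each $j\in S''$ falls under category~2 of Proposition~\ref{prop::decreasingVV}, so $r^{(K)}_j\le0$, and hence Proposition~\ref{prop::mcInduction} with $k=K$ gives $R(S)-R(S^{(K)})=\sum_{j\in S''}r^{(K)}_j\bP[j\prec(S^{(K)}\cup S''\setminus j)_+]\le0$. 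Moreover any list in $\cQ(S)\setminus\cQ(S^{(K)})$ must avoid $S^{(K)}$ yet contain some $j\in S''$, and $\bP[j\prec0\prec S^{(K)}]=0$ for such $j$ by Remark~\ref{rem::procedure}; so $\cQ(S)$ and $\cQ(S^{(K)})$ agree up to measure zero, giving $\sum_{l\in\cQ(S)}V(l)P(l)=\sum_{l\in\cQ(S^{(K)})}V(l)P(l)=R(S^{(K)})$ by Proposition~\ref{prop::vvRelMaintained}. Thus $\Phi(S)=R(S^{(K)})-R(S)\ge0$.

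For the inductive step, suppose $S\not\supseteq S^{(K)}$, let $k^*=\min\{k:s^{(k)}\notin S\}$ so that $S^{(k^*-1)}\subseteq S$, and put $S'=S\cup\{s^{(k^*)}\}$; it suffices to prove $\Phi(S)\ge\Phi(S')$ and then invoke the induction hypothesis on $S'$. Since $V$ is a function of the list alone, every list already in $\cQ(S)\subseteq\cQ(S')$ contributes $V(l)P(l)$ to both $\sum_{l\in\cQ(S)}V(l)P(l)$ and $\sum_{l\in\cQ(S')}V(l)P(l)$, even if it switches which product it buys; and every list in $\cQ(S')\setminus\cQ(S)$ must buy $s^{(k^*)}$ while avoiding $S$, hence has $\kappa(l)=k^*$ and $V(l)=V^{(k^*)}$. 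Therefore $\sum_{l\in\cQ(S')}V(l)P(l)-\sum_{l\in\cQ(S)}V(l)P(l)=V^{(k^*)}\bigl(Q(S')-Q(S)\bigr)$, and the goal reduces to
\[
V^{(k^*)}\bigl(Q(S')-Q(S)\bigr)\ \le\ R(S')-R(S).
\]

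To establish this I would expand both sides using first-passage decompositions of the Markov chain. Let $A=S\setminus S^{(k^*-1)}$, $p=\bP[s^{(k^*)}\prec S_+]$, and $q_j=\bP_{s^{(k^*)}}[j\prec(S^{(k^*-1)}\cup(A\setminus j))_+]$ for $j\in A$. Conditioning on the first visit to $s^{(k^*)}$ shows $Q(S')-Q(S)=p\,\bP_{s^{(k^*)}}[0\prec S]$, while applying Proposition~\ref{prop::mcInduction} with parameter $k^*-1$ to the assortments $S'$ and $S$ and subtracting gives $R(S')-R(S)=p\bigl(r^{(k^*-1)}_{s^{(k^*)}}-\sum_{j\in A}r^{(k^*-1)}_jq_j\bigr)$. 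Using $r^{(k^*-1)}_{s^{(k^*)}}=V^{(k^*)}\,\bP_{s^{(k^*)}}[0\prec S^{(k^*-1)}]$ together with the identity $\bP_{s^{(k^*)}}[0\prec S^{(k^*-1)}]=\bP_{s^{(k^*)}}[0\prec S]+\sum_{j\in A}q_j\,\bP_j[0\prec S^{(k^*-1)}]$, the inequality above collapses, after cancelling the common factor $p$ (the case $p=0$ being trivial), to
\[
\sum_{j\in A}q_j\bigl(V^{(k^*)}\,\bP_j[0\prec S^{(k^*-1)}]-r^{(k^*-1)}_j\bigr)\ \ge\ 0,
\]
which holds term by term: $q_j\ge0$, and each bracket is nonnegative since $s^{(k^*)}$ was chosen to maximize incremental efficiency at iteration $k^*$ (so $r^{(k^*-1)}_j\le V^{(k^*)}\bP_j[0\prec S^{(k^*-1)}]$ whenever $\bP_j[0\prec S^{(k^*-1)}]>0$), whereas if $\bP_j[0\prec S^{(k^*-1)}]=0$ then $j$ was already ``killed'' and $r^{(k^*-1)}_j\le0$ by Proposition~\ref{prop::decreasingVV}.

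The main obstacle is the Markov-chain bookkeeping in the third paragraph: establishing the two first-passage identities precisely---in particular handling re-visits to $s^{(k^*)}$, distinguishing the event $\{s^{(k^*)}\in l,\ l\cap S=\emptyset\}$ from $\{s^{(k^*)}\prec S_+\}$, and applying Proposition~\ref{prop::mcInduction} at the shifted iteration $k^*-1$ rather than $0$. Once the algebra is arranged so that every summand carries the factor $q_j$, the residual inequality is an immediate consequence of the greedy, efficiency-maximizing choice of $s^{(k^*)}$ and of Proposition~\ref{prop::decreasingVV}.
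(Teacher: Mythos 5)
Your proof is correct and follows essentially the same route as the paper's: the same base case for $S\supseteq S^{(K)}$ via Propositions~\ref{prop::decreasingVV} and~\ref{prop::mcInduction}, the same inductive step of adding the lowest-indexed missing $s^{(k^*)}$, the same use of Proposition~\ref{prop::keyCalculus} to decompose $Q(S')-Q(S)$ and $R(S')-R(S)$, and the same residual inequality $\sum_{j\in A}q_j\bigl(V^{(k^*)}\bP_j[0\prec S^{(k^*-1)}]-r^{(k^*-1)}_j\bigr)\ge0$ verified term by term (splitting on whether $\bP_j[0\prec S^{(k^*-1)}]$ vanishes). The ``Markov-chain bookkeeping'' you flag as the main obstacle is precisely what Propositions~\ref{prop::keyCalculus} and~\ref{prop::mcInduction} already supply, so there is no gap.
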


By Corollary~\ref{cor::mcImpl} and Theorem~\ref{thm::mcInsurm}, every Markov Chain choice model has a VVM which is simultaneously implementable and insurmountable, and tractably computable from Definition~\ref{def::procedure}.
Our main result now immediately follows from Lemmas~\ref{lem::LB},~\ref{lem::UB}.

\begin{theorem} \label{thm::mcMR}
Under Markov Chain choice models, the optimal assortment auction is a computationally tractable Myersonian mechanism, whose revenue equals the expected virtual surplus.
\end{theorem}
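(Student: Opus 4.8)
The plan is to assemble Theorem~\ref{thm::mcMR} directly from the machinery developed earlier in the section, treating it as a corollary of Corollary~\ref{cor::mcImpl}, Theorem~\ref{thm::mcInsurm}, and Lemmas~\ref{lem::LB}--\ref{lem::UB}, together with the tractability of the procedure in Definition~\ref{def::procedure}. First I would fix an arbitrary Markov Chain choice model with buyer list distributions $P_1,\ldots,P_m$, and for each buyer $i$ run the procedure of Definition~\ref{def::procedure} on $P_i$ to obtain a VVM $V_i:\Omega\to\bR$. By Corollary~\ref{cor::mcImpl}, each $V_i$ is implementable for $P_i$; by Theorem~\ref{thm::mcInsurm}, each $V_i$ is insurmountable for $P_i$. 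Thus the hypotheses of both Lemma~\ref{lem::LB} and Lemma~\ref{lem::UB} are simultaneously met by $V_1,\ldots,V_m$.

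Next I would combine the two lemmas. Lemma~\ref{lem::UB} gives that the expected virtual surplus~\eqref{eqn::evs}, computed with these $V_i$, is an upper bound on the revenue of every feasible truthful mechanism, hence on $\OPT$ for Problem~\ref{prob::aa}. Lemma~\ref{lem::LB} gives a concrete Myersonian mechanism --- one which, on each realization $\vell$, offers assortments so that the buyers in $\argmax_{M'\in\cF}\sum_{i\in M'}V_i(\ell_i)$ receive non-zero products --- whose revenue equals that same expected virtual surplus. A Myersonian mechanism of this form is feasible (the winning set lies in $\cF$ by construction) and truthful (it is of the taxation form of Proposition~\ref{prop::taxation}, since the assortment offered to buyer $i$ depends only on $\vell_{-i}$ through the other buyers' virtual valuations). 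Therefore this mechanism both attains and cannot exceed the bound, so it is optimal, and the optimal revenue equals the expected virtual surplus. This also shows the optimum is structurally Myersonian, generalizing Proposition~\ref{prop::myePricing}.

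For computational tractability, I would note that Definition~\ref{def::procedure} runs for at most $n$ iterations, each selecting a product maximizing the incremental efficiency~\eqref{eqn::incrEff}; the quantities $\bP_j[0\prec S^{(k-1)}]$ and $\bP_j[s^{(k)}\prec S^{(k-1)}_+]$ needed for the efficiency and the externality-adjusted price updates are first-passage-type probabilities in the Markov chain, computable in polynomial time by solving linear systems (as in the methodology of \cite{blanchet2016markov}, cited after Definition~\ref{def::procedure}). Hence each $V_i$ is computed in polynomial time. At auction time, given reports $\vell$, each $V_i(\ell_i)$ is read off as $V^{(\kappa(\ell_i))}$ by scanning the precomputed sequence $s^{(1)},\ldots,s^{(K)}$ for the first element appearing in $\ell_i$, and the winner set and allocation (the minimal-index assortment $S^{(k)}$ whose left-side slope the winner still beats, from which she takes her most-preferred product) are determined in polynomial time. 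Combining the parts yields the theorem.

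I expect no genuine obstacle here: the real work was done in Corollary~\ref{cor::mcImpl} and especially Theorem~\ref{thm::mcInsurm}, and the final step is bookkeeping. The only point requiring a little care is that a feasibility constraint $\cF$ of the form~\eqref{eqn::feasibility} may force the assortments offered in Lemma~\ref{lem::LB} to interact across buyers, but since $\cF$ is downward-closed and the construction in Lemma~\ref{lem::LB} offers a nonempty assortment only to the virtual-surplus-maximizing feasible set, feasibility and the taxation-principle structure are preserved; this is already handled by the statement of Lemma~\ref{lem::LB}, so I would simply invoke it rather than re-derive it.
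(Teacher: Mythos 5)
Your proposal is correct and follows essentially the same route as the paper: combine Corollary~\ref{cor::mcImpl} (implementability) and Theorem~\ref{thm::mcInsurm} (insurmountability) to show each $V_i$ from Definition~\ref{def::procedure} satisfies both conditions, then invoke Lemmas~\ref{lem::LB} and~\ref{lem::UB} to conclude a Myersonian mechanism attains and cannot exceed the expected virtual surplus, with tractability from the procedure itself. The paper dispatches this in two sentences; your version merely spells out the same bookkeeping (including the polynomial-time computation of the first-passage probabilities) in more detail.
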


%
%
%
 \bibliographystyle{splncs04}
 \bibliography{bibliography}

\appendix\clearpage

\section*{APPENDIX}

\section{Deferred Proofs and Calculations from Section~\ref{sec::assortmentAuctions}} \label{apx:secTwo}

\begin{proof}[Proof of Proposition~\ref{prop::taxation}.]
Take any deterministic truthful mechanism defined by allocation functions $j_1,\ldots,j_m$.
Fix a buyer $i\in M$ and true list $\ell_i\in\Omega$.  Let $T_i$ be the ``taxation'' function defined by $$T_i(\vell_{-i})=\bigcup_{\ell'_i\in\Omega}j_i(\ell'_i,\vell_{-i})$$ which takes the union of all products buyer $i$ could potentially receive based on the ``lies'' $\ell'_i$ she could potentially tell.
Through the incentive-compatibility constraints~\eqref{eqn::incentiveCompatible}, we deduce that $\ell_i(j_i(\ell_i,\vell_{-i}))\le\ell_i(j)$ for all $j\in T_i(\vell_{-i})$.
Through the individual-rationality constraints~\eqref{eqn::individuallyRational}, we deduce that
$\ell_i(j_i(\ell_i,\vell_{-i}))\le\ell_i(0)$.
Since it is obvious that $j_i(\ell_i,\vell_{-i})\in T_i(\vell_{-i})$, the above deductions collectively imply that $j_i(\ell_i,\vell_{-i})=\argmin_{j\in T_i(\vell_{-i})_+}\ell_i(j)$, completing the proof.
\end{proof}

\begin{proof}[Proof of Proposition~\ref{prop::myePricing}.]
Consider the mechanism described in Proposition~\ref{prop::myePricing}.

\underline{Truthfulness}: Consider any buyer $i$ in this mechanism, and consider a realization of the reports of the other buyers $\vell_{-i}$.
Let $j'$ be the lowest-priced product whose virtual valuation $\phi_i(r_{j'})$ would allow buyer $i$ to win the auction.
Note that $j'\neq0$ (because the 0 valuation always has a non-positive virtual valuation), and we write $j'=\infty$ if given this $\vell_{-i}$ it is impossible for buyer $i$ to win the auction.
Now, consider the mechanism defined via the taxation principle by $T_i(\vell_{-i})=\{j'\}$ (or $T_i(\vell_{-i})=\emptyset$ if $j'=\infty$).  Repeating this procedure for all buyers $i$, we have described the mechanism in taxation form, and hence it is truthful.

\underline{Feasibility, and Expected Payment}: We claim that for every realization of $\vell$, the winner and payment is the same as what would have happened had the corresponding valuations been reported in the classical auctions setting.
Indeed, in the classical auctions setting, since the ironed virtual valuations are decreasing, buyer $i$ wins if and only if her valuation is at least $r_{j'}$, and in which case she pays $r_{j'}$.
Meanwhile, in our setting, buyer $i$ will make a purchase if and only if her list realizes to $(1,\ldots,j)$ for some $j\ge j'$, in which case she will choose product $j'$ and pay price $r_{j'}$, since she has buy-down preferences.
Therefore, the mechanism described in Proposition~\ref{prop::myePricing} can have at most one winner, and its expected payment equals exactly the expected virtual surplus, by the known results in the classical auctions setting.

\underline{Optimality}: To see that this mechanism is optimal, we again use the taxation principle.
Consider any feasible, truthful mechanism for assortment auctions under buy-down preferences.
Under buy-down preferences, any non-empty assortment chosen for $T_i(\vell_{-i})$ is equivalent to offering only the lowest-priced product $j'$ in that assortment.  This corresponds to setting a threshold price of $r_{j'}$ for buyer $i$ based on the reports of the other buyers.  By the taxation principle for classical auctions, this translates to a deterministic truthful mechanism, which is feasible because there is at most one winner.
Since in the classical auctions setting, even if we allow for randomized, Bayesian-truthful mechanisms which can charge arbitrary payments, it is not possible to exceed the expected virtual surplus, this completes the proof that it also cannot be exceeded in the assortment auctions setting under buy-down preferences.
\end{proof}

\begin{proof}[Proof of Lemma~\ref{lem::LB}.]
We first apply small perturbations to all the VVM's $V_1,\ldots,V_m$ so that for any realization of $\vell$, the optimization problem of $\max_{M'\in\cF}\sum_{i\in M'}V_i(\ell_i)$ has a unique solution.
This is equivalent to using a fixed deterministic tie-breaking rule when maximizing virtual surplus.
Having done this, we now specify the mechanism by defining $T_i(\vell_{-i})$, the assortment offered to buyer $i$ when the other realizations are $\vell_{-i}$, for all $i$ and $\vell_{-i}$.
This will result in a truthful mechanism, but we must show that it is also feasible, i.e.\ that the set of winners will lie in $\cF$.

For any buyer $i$ and realization of $\vell_{-i}$, there exists a threshold $\tau_i(\vell_{-i})$ such that $i\in\argmax_{M'\in\cF}\sum_{i\in M'}V_i(\ell_i)$ if an only if $V_i(\ell_i)\ge\tau_i(\vell_{-i})$.
Using the condition of implementability, we let $T_i(\vell_{-i})$ be the assortment such that $\cQ(T_i(\vell_{-i}))=\{l\in\Omega:V(l)\ge\tau_i(T_i(\vell_{-i}))\}$.
We claim that for any realization of $\vell$, this results in a feasible allocation.
Indeed, let $M'(\vell)=\argmax_{M'\in\cF}\sum_{i\in M'}V_i(\ell_i)$.
Then
\begin{align*}
i\in M'(\vell)
&\Longleftrightarrow V_i(\ell_i)\ge\tau_i(\vell_{-i}) \\
&\Longleftrightarrow \ell_i\in\{l\in\Omega:V(l)\ge\tau_i(T_i(\vell_{-i}))\} \\
&\Longleftrightarrow \ell_i\in\cQ(T_i(\vell_{-i}))
\end{align*}
and hence the buyers to make a purchase from $T_i(\vell_{-i})$ is exactly $M'(\vell)$, which is feasible.

We now compute the expected revenue of this mechanism, which can be written as $\sum_{i=1}^m\bE_{\vell_{-i}}[R_i(T_i(\vell_{-i}))]$.  By the condition of implementability, this is at least
\begin{align*}
\sum_{i=1}^m\bE_{\vell_{-i}}\Big[\sum_{l\in\cQ(T_i(\vell_{-i}))}V_i(l)P_i(l)\Big]
&=\sum_{i=1}^m\bE_{\vell_{-i}}\Big[\bE_{\ell_i}[V(\ell_i)\bI(\ell_i\in\cQ(T_i(\vell_{-i})))]\Big] \\
&=\bE_{\vell}\left[\sum_{i\in M'(\vell)}V_i(\ell_i)\right]
\end{align*}
where the second equality follows from the tower property of conditional expectation.
This is exactly the expected virtual surplus, completing the proof.
\end{proof}

\begin{proof}[Proof of Lemma~\ref{lem::UB}.]
Fix any feasible, deterministic, truthful mechanism.
By independence, we can separate the total revenue as follows:
\begin{align*}
\bE_{\vell}\left[\sum_{i=1}^mr_{j_i(\vell)}\right]=
\sum_{i=1}^m\bE_{\vell_{-i}}\left[\bE_{\ell_i}\Big[r_{j_i(\ell_i,\vell_{-i})}\Big|\vell_{-i}\Big]\right].
\end{align*}
Now, consider any buyer $i$ and fixed $\vell_{-i}$.  By the taxation principle in Proposition~\ref{prop::taxation}, there must exist an assortment $T_i(\vell_{-i})$ such that for all possible realizations of $\ell_i$, buyer $i$ is allocated her most-preferred item from $T_i(\vell_{-i})_+$.
Therefore, conditioned on any $\vell_{-i}$, we know that $\bE_{\ell_i}[r_{j_i(\ell_i,\vell_{-i})}|\vell_{-i}]=R_i(T_i(\vell_{-i}))$, which is at most $\sum_{l\in\cQ(T_i(\vell_{-i}))}V_i(l)P_i(l)$, by the insurmountability of the virtual valuation mapping $V_i$ for buyer $i$.

Therefore, we can upper-bound the expected revenue by
\begin{align*}
\sum_{i=1}^m\bE_{\vell_{-i}}\left[\sum_{l\in\cQ(T_i(\vell_{-i}))}V_i(l)P_i(l)\right]
&=\sum_{i=1}^m\bE_{\vell_{-i}}\Big[\bE_{\ell_i}[V(\ell_i)\bI(\ell_i\in\cQ(T_i(\vell_{-i})))]\Big] \\
&=\bE_{\vell}\left[\sum_{i=1}^mV_i(\ell_i)\bI(j_i(\vell)\neq0)\right] \\
&\le\bE_{\vell}\left[\max_{M'\in\cF}\sum_{i\in M'}V_i(\ell_i)\right]
\end{align*}
where the second equality follows from the tower property of conditional expectation, and
the inequality holds because for any $\vell$, it must be the case that the set of buyers $i$ for which $j_i(\vell)\neq0$ lies in $\cF$.
This completes the proof.
\end{proof}

\begin{proof}[Proof that Example~\ref{eg::pathological} satisfies claimed properties.]
It can be checked that the revenue frontier is defined by the non-nested sequence of efficient assortments $\emptyset,\{A\},\{A,C\},\{A,B\},\{A,B,C\}$, which have respective sales probabilities $0,\frac{1}{5},\frac{3}{5},\frac{4}{5},1$ and expected revenues 0, 5, 9.8, 12, 13.8.

For this example, trying to construct an implementable VVM $V$ with $\CQC^V(q)=\RF(q)$ for all $q\in[0,1]$ leads to a contradiction.
The crux of the issue is that $\{A,B\}$ follows $\{A,C\}$ in the sequence of efficient assortments, but $\cQ(\{A,B\})=\{(BA),(CB),(B)\}$ is \textit{not a superset of} $\cQ(\{A,C\})=\{(BA),(CB),(C)\}$.
As a result, while normally one would use the efficient assortments to satisfy implementability, in this case the sets $\cQ(S)$ are not nested over $S=\emptyset,\{A\},\{A,C\},\{A,B\},\{A,B,C\}$, which makes it not possible to satisfy the condition $\cQ(S)=\{l\in\Omega:V(l)\ge w\}$ for all $w$.

Using the efficient assortments to define virtual valuations anyway, as described in Section~\ref{sec::introEg}, we would get $V\big((BA)\big)=\frac{5}{1/5}=25$, $V\big((CB)\big)=V\big((C)\big)=\frac{9.8-5}{3/5-1/5}=12$, and $V\big((B)\big)=\frac{12-9.8}{4/5-3/5}=11$.
To see that the resulting Myersonian mechanism may not award a product to the winner, suppose that the buyer's realized list is $(C)$, and that the highest virtual valuation among other buyers is 10.
This buyer would win the auction since her virtual valuation is 12, and she would be allocated her most-preferred product from assortment $\{A,B\}$, whose corresponding slope is 11 (not assortment $\{A,B,C\}$, whose corresponding slope is 9).
Of course, this buyer with list $(C)$ has no item of interest in $\{A,B\}$, hence she must be allocated nothing in order to satisfy individual-rationality, even though she had the highest virtual valuation which was non-negative.
It can be checked that this leads to the virtual surplus not being earned in expectation by the Myersonian mechanism.
\end{proof}

\begin{proof}[Proof that Example~\ref{eg::nonNested} satisfies claimed properties.]
It can be checked that the VVM $V$ defined by $V(BA)=6,V(CB)=2,V(B)=1$ is insurmountable.
The conditions of implementability for $V$ are also satisfied, by setting $S$ to be an assortment from the non-nested sequence $\emptyset,\{A\},\{A,C\},\{B\}$.

The revenue frontier is formed by these assortments, which have respective sales probabilities $0,\frac{1}{3},\frac{2}{3},1$ and expected revenues $0,2,\frac{8}{3},3$.
Moreover, it can be checked that the point $(\frac{2}{3},\frac{8}{3})$ on the revenue frontier can only be attained by assortment $\{A,C\}$, while the point $(1,3)$ can only be attained by assortments $\{B\}$ or $\{A,B\}$.
There is no way to make the latter assortments a superset of the former.
Therefore, this distribution has a VVM which is both implementable and insurmountable, even though the revenue frontier cannot be formed by nested assortments.

This introduces an interesting curiosity.
Suppose that a buyer's list, drawn from this distribution, realizes to (CB).
If she is the only buyer in the auction, then she will be offered the ``reserve'' assortment of $\{B\}$, getting her second choice of product~$B$.
However, if there is another buyer such that a virtual valuation of $2$ is required to win the auction, then she will be offered the assortment of $\{A,C\}$, getting her first choice of product~$A$!
This shows that the optimal auction can give a buyer a more-preferred allocation under increased competition, which is something that cannot happen with the optimal auction in the classical Myersonian setting.
\end{proof}

\section{Deferred Proofs from Section~\ref{sec::markovChain}} \label{apx:mc}

First, we state and prove the following basic proposition about transitions on Markov chains, which will be used in our proofs for this section.

\begin{proposition} \label{prop::keyCalculus}
Let $A$ and $B$ be disjoint subsets of $N_+$ and let $c\in N_+\setminus(A\cup B)$.  Then
\begin{align*}
\bP[c\prec A]=\bP[c\prec A\cup B]+\sum_{b\in B}\bP[b\prec c\cup A\cup B\setminus b]\bP_b[c\prec A].
\end{align*}
Furthermore, if $d\neq c$ is another node in $N_+\setminus(A\cup B)$, then
\begin{align*}
\bP_d[c\prec A]=\bP_d[c\prec A\cup B]+\sum_{b\in B}\bP_d[b\prec c\cup A\cup B\setminus b]\bP_b[c\prec A].
\end{align*}
\end{proposition}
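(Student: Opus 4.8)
The plan is to prove both identities by a single first-passage analysis on the Markov chain trajectory, conditioning on which node of the set $U:=c\cup A\cup B$ is hit first. I would let $\tau$ be the first time the trajectory visits a node of $U$ (with $\tau=\infty$ if $U$ is never visited) and write $\sigma_\tau\in U$ for the node visited then. Since $A$, $B$ and the singleton $c$ are pairwise disjoint, exactly one of $\tau=\infty$, $\sigma_\tau=c$, $\sigma_\tau\in A$, or $\sigma_\tau=b$ for some $b\in B$ holds. The first observation is that on the event $\{c\prec A\}$ the node $c$ is visited and precedes every node of $A$, so the first node of $U$ reached cannot lie in $A$ and $\tau<\infty$; hence $\{c\prec A\}$ is the disjoint union of $\{c\prec A\}\cap\{\sigma_\tau=c\}$ and the events $\{c\prec A\}\cap\{\sigma_\tau=b\}$ over $b\in B$.

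Next I would identify each piece. The piece $\{c\prec A\}\cap\{\sigma_\tau=c\}$ is exactly $\{c\prec A\cup B\}$, since $\sigma_\tau=c$ says precisely that $c$ precedes all of $A\cup B$. For a fixed $b\in B$, the event $\{\sigma_\tau=b\}$ is exactly $\{b\prec c\cup A\cup B\setminus b\}$, so $\bP[\sigma_\tau=b]=\bP[b\prec c\cup A\cup B\setminus b]$; moreover, on this event no node of $A$ and not $c$ has been visited through time $\tau$, so applying the strong Markov property at $\tau$ gives that, conditioned on $\{\sigma_\tau=b\}$, the event $\{c\prec A\}$ has conditional probability $\bP_b[c\prec A]$ (the trajectory restarted from $b$ reaching $c$ before all of $A$). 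Summing the probabilities of the disjoint pieces then yields the first identity. The second identity follows by running the identical argument on the trajectory started from node $d$; this is legitimate because the hypotheses $d\in N_+\setminus(A\cup B)$ and $d\neq c$ guarantee $d\notin U$, so the first-passage decomposition of $\{c\prec A\}$ is unchanged, with every probability replaced by its $\bP_d$-counterpart.

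I expect the only delicate point to be the bookkeeping of the degenerate cases and the choice of the residual event. Concretely, one must check that $\{c\prec A\}$ contributes nothing when $\tau=\infty$ or when $\sigma_\tau\in A$ — this is where one fixes the convention that $c\prec A$ requires $c$ to be actually visited, and where one uses that node $0$ is absorbing and reached with probability $1$, so $\tau<\infty$ automatically whenever $0\in U$ — and one must verify that on $\{\sigma_\tau=b\}$ the residual event is the ``restart from $b$'' event $\{c\prec A\}$, and not $\{c\prec A\cup B\}$: it is, because the original event imposes no constraint on $B$ and the Markov property at $\tau$ is invoked only for the future of a trajectory that has so far avoided $c$ and all of $A$. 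Everything else is a routine application of the strong Markov property together with the disjointness of the pieces.
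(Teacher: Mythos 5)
Your proof is correct and follows essentially the same route as the paper's: both decompose the event $\{c\prec A\}$ by the identity of the first node of $c\cup A\cup B$ visited, apply the (strong) Markov/memorylessness property at that first hitting time to restart from $b\in B$, and obtain the second identity by replacing the initial distribution with the point mass at $d$. The only difference is presentational — you introduce the stopping time $\tau$ explicitly and dwell more on the degenerate cases, whereas the paper states the first-passage decomposition and memorylessness argument informally.
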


\begin{proof}[Proof of Proposition~\ref{prop::keyCalculus}.]
Consider the event $c\prec A$ that $c$ precedes any node in $A$.  For it to occur, the first node visited among \textit{all} the nodes in $c\cup A\cup B$ must be either $c$, or some node $b\in B$.
Conditioned on the first such node visited being $c$, the event $c\prec A$ is guaranteed to occur.
Meanwhile, conditioned on the first such node visited being some $b\in B$, the probability that event $c\prec A$ occurs is equal to the probability of it occurring had the Markov Chain been fixed to start at node $b$, by memorylessness.
Therefore, using the law of total expectation, we get
\begin{align*}
\bP[c\prec A]=\bP[c\prec A\cup B]\cdot1+\sum_{b\in B}\bP[b\prec c\cup A\cup B\setminus b]\cdot\bP_b[c\prec A]
\end{align*}
which is exactly the first statement.  For the second statement, note that fixing the Markov Chain to start at a node $d$ outside of $c\cup A\cup B$ is equivalent to changing the starting probabilities of the Markov Chain so that $\lambda_d=1$, completing the proof.
\end{proof}

\subsection*{Proofs from Section~\ref{sec::markovChain}}

\begin{proof}[Proof of Proposition~\ref{prop::decreasingVV}.]
Consider any iteration $k$, and product $j\neq s^{(k)}$ satisfying $j\in N\setminus S^{(k-1)}$ and $\bP_j[0\prec S^{(k-1)}]\neq0$ which was eligible to be added, but not added, to the assortment during iteration $k$.
When externality-adjusted price $r^{(k)}_j$ is set during iteration $k$, we have
\begin{align}
r^{(k)}_j
&=r^{(k-1)}_j-r^{(k-1)}_{s^{(k)}}\bP_j[s^{(k)}\prec S^{(k-1)}_+] \nonumber \\
&\le r^{(k-1)}_j-\frac{r^{(k-1)}_j}{\bP_j[0\prec S^{(k-1)}]}\bP_j[s^{(k)}\prec S^{(k-1)}_+]\bP_{s^{(k)}}[0\prec S^{(k-1)}] \nonumber \\
&=\frac{r^{(k-1)}_j(\bP_j[0\prec S^{(k-1)}]-\bP_j[s^{(k)}\prec S^{(k-1)}_+]\bP_{s^{(k)}}[0\prec S^{(k-1)}])}{\bP_j[0\prec S^{(k-1)}]} \nonumber \\
&=r^{(k-1)}_j\frac{\bP_j[0\prec S^{(k)}]}{\bP_j[0\prec S^{(k-1)}]} \label{eqn::decrVVRatio}
\end{align}
where the inequality holds because $\frac{r^{(k-1)}_{s^{(k)}}}{\bP_{s^{(k)}}[0\prec S^{(k-1)}]}\ge\frac{r^{(k-1)}_j}{\bP_j[0\prec S^{(k-1)}]}$, and the final equality holds by applying the second part of Proposition~\ref{prop::keyCalculus} with $d=j$, $A=S^{(k-1)}$, $B=\{s^{(k)}\}$, and $c=0$.

There are two cases.  If $\bP_j[0\prec S^{(k)}]=0$, then the inequality ending with~\eqref{eqn::decrVVRatio} implies that $r^{(k)}_j\le0$.
Furthermore, consider any future iteration $k'>k$ where the externality-adjusted price $r^{(k')}_j$ of product $j$ is updated.
We would like to argue that $r^{(k')}_j=r^{(k'-1)}_j$, by arguing that the probability $\bP_j[s^{(k')}\prec S^{(k'-1)}_+]$ used in the update step is zero.
Since product $s^{(k')}$ was added to the assortment during iteration $k'$, it must be the case that $\bP_{s^{(k')}}[0\prec S^{(k'-1)}]>0$.  However, we know that
\begin{align*}
\bP_j[0\prec S^{(k)}]
&\ge\bP_j[s^{(k')}\prec 0\prec S^{(k'-1)}] \\
&=\bP_j[s^{(k')}\prec S^{(k'-1)}_+]\bP_{s^{(k')}}[0\prec S^{(k'-1)}]
\end{align*}
where the inequality holds because we are only adding constraints (note that $S^{(k'-1)}\supseteq S^{(k)}$ since $k'>k$).
Since the LHS equals 0 and the second term on the RHS is strictly positive, it follows that the first term on the RHS must be zero
(because a probability cannot be negative).
This completes the argument that $0\ge r^{(k)}_j=r^{(k+1)}_j=\cdots$, in the first case where $\bP_j[0\prec S^{(k)}]=0$.

In the other case where $\bP_j[0\prec S^{(k)}]>0$, the inequality ending with~\eqref{eqn::decrVVRatio} implies that
$
\frac{r^{(k)}_j}{\bP_j[0\prec S^{(k)}]}\le\frac{r^{(k-1)}_j}{\bP_j[0\prec S^{(k-1)}]}.
$
That is, the incremental efficiency of a product $j$ decreases over the iterations where it is eligible to be added to the assortment.

Putting the two cases together, we arrive at the following understanding of the procedure.
Since $S^{(0)}=\emptyset$, on iteration $k=1$, all products $j\in N$ satisfy $\bP_j[0\prec S^{(k-1)}]\neq0$ and are eligible to be added to the assortment.
Every iteration, a product $s^{(k)}$ gets added, and for all other products $j$ that were eligible to be added during iteration $k$ (these products satisfy $\bP_j[0\prec S^{(k-1)}]\neq0$), either $j$ is made ineligible by the addition of $s^{(k)}$ (this is the second statement of Proposition~\ref{prop::decreasingVV} where $\bP_j[0\prec S^{(k)}]=0$), or the incremental efficiency of $j$ decreases.
If $j$ is eventually added into the assortment at a later iteration $k'>k$ (this is the first statement of Proposition~\ref{prop::decreasingVV}), then
\begin{align*}
\frac{r^{(k'-1)}_{s^{(k')}}}{\bP_{s^{(k')}}[0\prec S^{(k'-1)}]}\le\frac{r^{(k-1)}_{s^{(k')}}}{\bP_{s^{(k')}}[0\prec S^{(k-1)}]}\le\frac{r^{(k-1)}_{s^{(k)}}}{\bP_{s^{(k)}}[0\prec S^{(k-1)}]}
\end{align*}
where the second inequality holds because product $s^{(k)}$ had the highest incremental efficiency during iteration $k$.
This completes the proof.
\end{proof}

\begin{proof}[Proof of Proposition~\ref{prop::mcInduction}.]
Proceed by induction.  If $k=0$, then $S^{(0)}=\emptyset$ and $r^{(0)}_j=r_j$ for all $j$.  Both the LHS and the RHS of the statement describe $R(S)$, the revenue under the original prices from offering assortment $S$.

If $k>0$, then suppose that the statement holds with $k$ replaced by $k-1$.  We can write
\begin{align*}
R(S^{(k)}\cup S)-R(S^{(k)})=(R(S^{(k-1)}\cup s^{(k)}\cup S)-R(S^{(k-1)}))-(R(S^{(k-1)}\cup s^{(k)})-R(S^{(k-1)})).
\end{align*}
By the induction hypothesis, this equals
\begin{align*}
&r^{(k-1)}_{s^{(k)}}\bP[s^{(k)}\prec(S^{(k-1)}\cup S)_+]+\sum_{j\in S}r^{(k-1)}_j\bP[j\prec(S^{(k)}\cup S\setminus j)_+]-r^{(k-1)}_{s^{(k)}}\bP[s^{(k)}\prec S^{(k-1)}_+] \\
= &\sum_{j\in S}r^{(k-1)}_j\bP[j\prec(S^{(k)}\cup S\setminus j)_+]-r^{(k-1)}_{s^{(k)}}(\bP[s^{(k)}\prec S^{(k-1)}_+]-\bP[s^{(k)}\prec(S^{(k-1)}\cup S)_+]) \\
= &\sum_{j\in S}r^{(k-1)}_j\bP[j\prec(S^{(k)}\cup S\setminus j)_+]-r^{(k-1)}_{s^{(k)}}\sum_{j\in S}\bP[j\prec(S^{(k)}\cup S\setminus j)_+]\bP_j[s^{(k)}\prec S^{(k-1)}_+] \\
= &\sum_{j\in S}r^{(k)}_j\bP[j\prec(S^{(k)}\cup S\setminus j)_+].
\end{align*}
where the second equality applies the first statement of Proposition~\ref{prop::keyCalculus} with $A=S^{(k-1)}_+$, $B=S$, and $c=s^{(k)}$, and the final equality follows from the way $r^{(k)}_j$ is defined for all $j\in S\subseteq N\setminus S^{(k)}$.
This completes the induction and the proof.
\end{proof}

\begin{proof}[Proof of Proposition~\ref{prop::vvRelMaintained}.]
Proceed by induction.  When $k=0$, the statement clearly holds.  When $k>0$, suppose that $R(S^{(k-1)})=\sum_{l\in\cQ(S^{(k-1)})}V(l)P(l)$.  By Proposition~\ref{prop::mcInduction},
\begin{align*}
R(S^{(k-1)}\cup s^{(k)})-R(S^{(k-1)})
&=r^{(k-1)}_{s^{(k)}}\bP[s^{(k)}\prec S^{(k-1)}_+] \\
&=\left(\frac{r^{(k-1)}_{s^{(k)}}}{\bP_{s^{(k)}}[0\prec S^{(k-1)}]}\right)\left(\bP[s^{(k)}\prec S^{(k-1)}_+]\bP_{s^{(k)}}[0\prec S^{(k-1)}]\right).
\end{align*}
The first term equals the virtual valuation $V(l)$ of lists $l\in\cL(s^{(k)}\prec0\prec S^{(k-1)})$.
The second term equals the measure of such lists, denoted by $\bP[s^{(k)}\prec0\prec S^{(k-1)}]$.

Therefore, $R(S^{(k-1)}\cup s^{(k)})-R(S^{(k-1)})=\sum_{l\in\cL(s^{(k)}\prec0\prec S^{(k-1)})}V(l)P(l)$.
Combining this with the induction hypothesis, we get that
$
R(S^{(k)})
=\sum_{l\in\cQ(S^{(k-1)})}V(l)P(l)+\sum_{l\in\cL(s^{(k)}\prec0\prec S^{(k-1)})}V(l)P(l).
$
The lists in $\cQ(S^{(k-1)})$, which have $j\prec0$ for some $j\in S^{(k-1)}$, are disjoint from the lists in $\cL(s^{(k)}\prec0\prec S^{(k-1)})$, and their union equals $\cQ(S^{(k)})$.
This completes the induction.
\end{proof}

\begin{proof}[Proof of Theorem~\ref{thm::mcInsurm}.]
Our goal is to prove that the expression
\begin{align} \label{eqn::mcVVcovers}
\sum_{l\in\cQ(S)}V(l)P(l)-R(S)
\end{align}
is non-negative for all $S\subseteq N$.
First we show that this is true if $S$ contains all of the products in $S^{(K)}=\{s^{(1)},\ldots,s^{(K)}\}$.
If not, we show that by adding the lowest-indexed product $s^{(k)}$ currently not in $S$ into $S$, the value of~\eqref{eqn::mcVVcovers} can only decrease.
These two statements combined yield the result.

To see that~\eqref{eqn::mcVVcovers} is non-negative if $S\supseteq S^{(K)}$, let $S'=S\setminus S^{(K)}$, so that $S^{(K)}\cup S'=S$.
Invoking Proposition~\ref{prop::mcInduction} with $k=K$, we derive that
$
R(S)-R(S^{(K)})=R(S^{(K)}\cup S')-R(S^{(K)})=\sum_{j\in S'}r^{(K)}_j\bP[j\prec(S^{(K)}\cup S'\setminus j)_+].
$
Now, by Proposition~\ref{prop::decreasingVV}, $r^{(K)}_j\le0$ for all $j\notin S^{(K)}$, which includes all $j\in S'$.
Therefore, $R(S)$ is at most $R(S^{(K)})$, which equals $\sum_{l\in\cQ(S^{(K)})}V(l)P(l)$ by Proposition~\ref{prop::vvRelMaintained}.
Now, by the stopping criterion, there cannot be any non-empty lists $l\notin\cQ(S^{(K)})$ with $P(l)>0$, and hence $\sum_{l\in\cQ(S^{(K)})}V(l)P(l)=\sum_{l\in\cQ(S)}V(l)P(l)$, completing the proof that~\eqref{eqn::mcVVcovers} is non-negative if $S\supseteq S^{(K)}$.

Now suppose that $s^{(k)}\notin S$, where $k\in\{1,\ldots,K\}$ is the \textit{smallest} index of such a product, i.e.\ $\{s^{(1)},\ldots,s^{(k-1)}\}=S^{(k-1)}\subseteq S$.  Let $S'=S\setminus S^{(k-1)}$, so that $S^{(k-1)}\cup S'=S$.  Invoking Proposition~\ref{prop::mcInduction} for iteration $k-1$, the following can be derived:
\begin{align}
&R(S\cup s^{(k)})-R(S) \nonumber \\
&=(R(S^{(k-1)}\cup S'\cup s^{(k)})-R(S^{(k-1)}))-(R(S^{(k-1)}\cup S')-R(S^{(k-1)})) \nonumber \\
&=\sum_{j\in S'\cup s^{(k)}}r^{(k-1)}_j\bP[j\prec(S^{(k-1)}\cup S'\cup s^{(k)}\setminus j)_+]-\sum_{j\in S'}r^{(k-1)}_j\bP[j\prec(S^{(k-1)}\cup S'\setminus j)_+] \nonumber \\
&=r^{(k-1)}_{s^{(k)}}\bP[s^{(k)}\prec S_+]-\sum_{j\in S'}r^{(k-1)}_j(\bP[j\prec(S\setminus j)_+]-\bP[j\prec(S\cup s^{(k)}\setminus j)_+]) \nonumber \\
&=r^{(k-1)}_{s^{(k)}}\bP[s^{(k)}\prec S_+]-\sum_{j\in S'}r^{(k-1)}_j\bP[s^{(k)}\prec S_+]\bP_{s^{(k)}}[j\prec(S\setminus j)_+] \nonumber \\
&\ge\bP[s^{(k)}\prec S_+]\left(r^{(k-1)}_{s^{(k)}}-\sum_{j\in S':\bP_j[0\prec S^{(k-1)}]\neq0}\left(\frac{r^{(k-1)}_j}{\bP_j[0\prec S^{(k-1)}]}\cdot\bP_j[0\prec S^{(k-1)}]\right)\bP_{s^{(k)}}[j\prec(S\setminus j)_+]\right) \nonumber \\
&\ge\bP[s^{(k)}\prec S_+]\left(r^{(k-1)}_{s^{(k)}}-\sum_{j\in S':\bP_j[0\prec S^{(k-1)}]\neq0}\frac{r^{(k-1)}_{s^{(k)}}}{\bP_{s^{(k)}}[0\prec S^{(k-1)}]}\bP_{s^{(k)}}[j\prec(S\setminus j)_+]\bP_j[0\prec S^{(k-1)}]\right). \label{eqn::2345}
\end{align}
The final equality applies the first statement of Proposition~\ref{prop::keyCalculus} with $A=(S\setminus j)_+$, $B=\{s^{(k)}\}$, and $c=j$.
The first inequality holds because for any $j\in S'$ with $\bP_j[0\prec S^{(k-1)}]=0$, $j$ cannot be one of the products added into the assortment before iteration $k$ (since $S'\cap S^{(k-1)}=\emptyset$), and hence $r^{(k-1)}_j\le0$ by Proposition~\ref{prop::decreasingVV}.
The second inequality holds because product $s^{(k)}$ has the maximum incremental efficiency at iteration $k$.

Meanwhile, since $\{s^{(1)},\ldots,s^{(k-1)}\}\subseteq S$, all of the lists in $\cL(s^{(k)}\prec0\prec S)$ had their virtual valuations defined during iteration $k$, equaling $r^{(k-1)}_{s^{(k)}}/\bP_{s^{(k)}}[0\prec S^{(k-1)}]$.  Note that $\cQ(S\cup s^{(k)})\setminus\cQ(S)$ is exactly the set $\cL(s^{(k)}\prec0\prec S)$, which establishes the following:
\begin{align}
&\sum_{l\in\cQ(S\cup s^{(k)})}V(l)P(l)-\sum_{l\in\cQ(S)}V(l)P(l) \nonumber \\
&=\frac{r^{(k-1)}_{s^{(k)}}}{\bP_{s^{(k)}}[0\prec S^{(k-1)}]}\bP[s^{(k)}\prec0\prec S] \nonumber \\
&=\frac{r^{(k-1)}_{s^{(k)}}}{\bP_{s^{(k)}}[0\prec S^{(k-1)}]}\bP[s^{(k)}\prec S_+]\bP_{s^{(k)}}[0\prec S^{(k-1)}\cup S'] \nonumber \\
&=\frac{r^{(k-1)}_{s^{(k)}}}{\bP_{s^{(k)}}[0\prec S^{(k-1)}]}\bP[s^{(k)}\prec S_+]\left(\bP_{s^{(k)}}[0\prec S^{(k-1)}]-\sum_{j\in S'}\bP_{s^{(k)}}[j\prec(S^{(k-1)}\cup S'\setminus j)_+]\bP_j[0\prec S^{(k-1)}]\right) \nonumber \\
&=\bP[s^{(k)}\prec S_+]\left(r^{(k-1)}_{s^{(k)}}-\frac{r^{(k-1)}_{s^{(k)}}}{\bP_{s^{(k)}}[0\prec S^{(k-1)}]}\sum_{j\in S':\bP_j[0\prec S^{(k-1)}]\neq0}\bP_{s^{(k)}}[j\prec(S\setminus j)_+]\bP_j[0\prec S^{(k-1)}]\right). \label{eqn::3456}
\end{align}
The third equality applies the second part of Proposition~\ref{prop::keyCalculus} with $d=s^{(k)}$, $A=S^{(k-1)}$, $B=S'$, and $c=0$.

Now, since expressions~\eqref{eqn::2345} and~\eqref{eqn::3456} are identical, we see that
$$\left(\sum_{l\in\cQ(S\cup s^{(k)})}V(l)P(l)-R(S\cup s^{(k)})\right)-\left(\sum_{l\in\cQ(S)}V(l)P(l)-R(S)\right)\le0,$$
i.e.\ the value of expression~\eqref{eqn::mcVVcovers} after adding $s^{(k)}$ into assortment $S$ is no greater than before.
We can iteratively apply this argument to conclude that
\begin{align*}
\sum_{l\in\cQ(S)}V(l)P(l)-R(S)\ge\sum_{l\in\cQ(S\cup S^{(K)})}V(l)P(l)-R(S\cup S^{(K)}),
\end{align*}
and we have already shown that the RHS is non-negative, completing the proof.
\end{proof}

\end{document}